\DeclareMathAlphabet{\pazocal}{OMS}{zplm}{m}{n}
\DeclareMathOperator*{\argmax}{arg\,max}
\newtheorem{theorem}{Theorem}
\newtheorem{definition}{Definition}
\newtheorem{lemma}{Lemma}
\begin{document}
\title{A Max Pressure Algorithm for Traffic Signals Considering Pedestrian Queues}


\author[1]{Hao Liu\corref{cor1}}
\ead{hfl5376@psu.edu}
\author[1]{Vikash V. Gayah}
\ead{gayah@engr.psu.edu}
\author[2]{Michael Levin}
\ead{mlevin@umn.edu}
\cortext[cor1]{Corresponding author}
\address[1]{The Pennsylvania State University, University Park, PA, United States}
\address[2]{University of Minnesota, Minneapolis, MN, United States}

\begin{abstract}
This paper proposes a novel max-pressure (MP) algorithm that incorporates pedestrian traffic into the MP control architecture. Pedestrians are modeled as being included in one of two groups: those walking on sidewalks and those queued at intersections waiting to cross. Traffic dynamics models for both groups are developed. Under the proposed control policy, the signal timings are determined based on the queue length of both vehicles and pedestrians waiting to cross the intersection. The proposed algorithm maintains the decentralized control structure, and the paper proves that it also exhibits the maximum stability property for both vehicles and pedestrians. Microscopic traffic simulation results demonstrate that the proposed model can improve the overall operational efficiency---i.e., reduce person travel delays---under various vehicle demand levels compared to the original queue-based MP (Q-MP) algorithm and a recently developed rule-based MP algorithm considering pedestrians. The Q-MP ignores the yielding behavior of right-turn vehicles to conflicting pedestrian movements, which leads to high delay for vehicles. On the other hand, the delay incurred by pedestrians is high from the rule-based model since it imposes large waiting time tolerance to guarantee the operational efficiency of vehicles. The proposed algorithm outperforms both models since the states of both vehicles and pedestrians are taken into consideration to determine signal timings.
\end{abstract}

\begin{keyword} 
Max Pressure algorithm; Pedestrians; Decentralized traffic signal control; Maximum stability
\end{keyword}

\maketitle

\section{Introduction}
Real-time decentralized signal control for large-scale urban networks has attracted intensive research interest in the past decades. Various methods have been developed, including SCOOT \citep{hunt1981scoot}, OPAC \citep{gartner1983opac}, UTOPIA \citep{mauro1990utopia}, SCATS \citep{lowrie1990scats} and RHODES \citep{mirchandani2001real}. More recently, max-pressure (MP) has become a very popular decentralized traffic signal control algorithm. This algorithm was originally developed to address the packet scheduling problem in wireless communication networks \citep{tassiulas1990stability}, but was applied to traffic signal control in \citep{varaiya2013max}. Since then, many MP-based traffic signal control algorithms \citep{kouvelas2014maximum, gregoire2014capacity, xiao2014pressure, le2015decentralized, wu2017delay, li2019position, rey2019blue, levin2020max, mercader2020max, dixit2020simple, liu2022novel, xu2022integrating, wang2022learning, liu2023total} have been proposed as a result of its ease of implementation, prominent control performance, and fast computational speed. Additionally, certain MP algorithms share the following desirable properties: no need for external demand information and \emph{maximum stability}. The latter property indicates that the MP algorithm can accommodate any demand that can be served by an admissible control strategy. A detailed review of MP-based signal control algorithms can be found in \citep{levin2023max}.

The MP algorithms determine signal timings for each intersection based on the \emph{pressure} of all phases, which is calculated based on the values of a selected metric---such as queue length, travel time and travel delay---measured for all local movements at the intersection. In general, the MP control algorithm aims at keeping the overall pressure low to maintain the operational efficiency at the intersection, and the pressure of each phase reflects the pressure value that can be potentially ``released'' from the associated intersection by the phase. Therefore, the MP algorithms usually tend to allocate green time in favor of the phase with the highest pressure. 

According to this basic working mechanism, it is easy to imagine that the control performance of an MP algorithm is highly dependent on the selected metric for the pressure computation. Therefore, the first aspect that the MP-related studies in the literature have mainly concentrated on is the selection of the metric. The queue length is used as the metric in \citep{varaiya2013max}. Note that, since point queue models were employed for traffic dynamics modeling in \citep{varaiya2013max}, the queue length is equivalent to the number of vehicles of a given link. Thus, for simplicity and consistency with what is used in the MP literature, the term ``queue" is referred to as the number of vehicles in this paper. However, this metric does not consider the position, moving status of vehicles and maximum occupancies of links, which can affect the control performance significantly. Various MP-based algorithms have been proposed to address these drawbacks. For example, the maximum occupancies of links were taken into consideration in \citep{gregoire2014capacity, xiao2014pressure}. A position-weighted MP algorithm was proposed in \citep{li2019position}, which incorporates each vehicle's location on the link into the pressure calculation. Compared to queue length, travel time and travel delay are inherently influenced by the maximum occupancy of links, and travel time-based \citep{mercader2020max} and delay-based \citep{dixit2020simple, liu2022novel, liu2023total} MP algorithms have also been developed and shown to outperform queue-based MP approaches.

In addition to the metric, another pressing area of interest with respect to MP algorithms is its practicability. Many MP-based algorithms \citep{varaiya2013max, gregoire2014capacity, liu2022novel, li2019position} update phases at a fixed frequency in an arbitrary order, which could be a barrier for implementation since doing so may confuse travelers who expect regular cyclic structure. MP algorithms with fixed signal sequences have been developed to address this issue \citep{levin2020max, mercader2020max}. 

Another practical concern for the aforementioned MP algorithms is that they do not distinguish vehicle types in a mixed traffic flow environment, and this treatment could diminish the overall operational efficiency. An MP algorithm considering public transit signal priority in networks with exclusive bus lanes was proposed in \citep{xu2022integrating}. In addition to bus rapid transit, pedestrians are another important component of urban transportation networks. However, incorporating pedestrians into the MP control architecture has received very little attention in the research literature. \citet{xu2023ped} proposed an MP algorithm that considers pedestrian access at intersections by using a rule-based approach. Specifically, when the waiting time of the first-arrived pedestrian exceeds a predefined threshold value, the traffic signal provides protected phase to the corresponding pedestrian movement. Otherwise, the signals are controlled by a regular queue-based MP algorithm. Although the maximum stability of the proposed model was proved for vehicles, the stable region is dependent on the predefined threshold for pedestrian's waiting times. Also, the stability for pedestrians in the network is not clear. In addition, the proposed method did not consider the yielding behavior of right-turn vehicles to pedestrians during permissive phases, which is a common signal design in urban transportation networks.

In order to partially bridge these gaps, this paper proposes an MP algorithm that incorporates pedestrians into the pressure calculation while maintaining the MP control structure. The contributions of this paper include: 1. develop a traffic dynamics model for pedestrians; 2. propose a queue-based MP algorithm that considers both vehicle and pedestrian queues; 3. demonstrate the proposed model inherits the maximum stability property for both vehicles and pedestrians; 4. show the proposed model can outperform the queue-based MP proposed in \citep{varaiya2013max} and the rule-based model proposed in \citep{xu2023ped} under various traffic conditions.

The remainder of this paper is organized as follows. Section \ref{sec:dynamics} describes the traffic dynamics models for both vehicles and pedestrians. Then, the novel MP algorithm that incorporates pedestrian states is proposed, and the maximum stability is proved in Section \ref{sec:pqmp}. Section \ref{sec:simulation} shows the results of microscopic traffic simulations, which demonstrate that the proposed model can outperform two baseline models under various vehicle demand levels. Finally, concluding remarks are provided in Section \ref{sec:conclusion}. 

\section{Traffic dynamics}\label{sec:dynamics}
First, we define the required notations used for the modeling of traffic dynamics in this paper. We consider a network consisting of a set of intersections $\pazocal{N}$ and a set of directional links $\pazocal{L}$. A link that connects a pair of adjacent intersections, $i$ and $j$, is denoted by $(i,j)$. A movement from link $(i,j)$ to link $(j,k)$ is denoted by $(i,j,k)$. Links are classified as either entry links $\pazocal{L}_\mathrm{enter}$ from where vehicles enter the network, exit links $\pazocal{L}_\mathrm{exit}$ from where vehicles exit the network, and internal links $\pazocal{L}_{\mathrm{in}}$ that simply connect intersections. Also, let $\pazocal{U}^i$ ($\pazocal{D}^i$) indicate the set of upstream (downstream) intersections of intersection $i$ if $i$ is not an entry (exit) intersection.

We assume without loss of generality that all links have a sidewalk on the right side that allows pedestrian movements in both directions, as shown in Figure \ref{fig:ped_net}. At each intersection, pedestrians are allowed to wait at each corner, referred to as pedestrian nodes, to cross the intersection. We use $i_{\alpha}$ to denote a pedestrian node at intersection $i$, where $\alpha$ indicates the relative location of the pedestrian node. Assume each intersection has four corners (i.e., that each intersection has four links). Then, for example, $i_{SE}$ represents the pedestrian node in the southeast of intersection $i$, as shown in Figure \ref{fig:ped_net}. Each sidewalk is represented by a pair of pedestrian nodes. We use $\pazocal{O}^{i_\alpha}$ to indicate the set of adjacent pedestrian nodes of $i_{\alpha}$ at the same intersection, $i$, and $\pazocal{U}^{i_\alpha}$ to denote the set of pedestrian nodes from adjacent intersections of $i$ that connect $i_\alpha$. For the example shown in Figure \ref{fig:ped_net}, $\pazocal{O}^{j_{SW}}=\{j_{NW}, j_{SE}\}$ and $\pazocal{U}^{j_{SW}}=\{i_{SE}, k_{NW}\}$. 

Let $\pazocal{M}^{vi}$ and $\pazocal{M}^{pi}$ indicate the set of vehicle movements and pedestrian movements between two pedestrian nodes at intersection $i$, respectively, and let $\pazocal{S}^i$ be the set of admissible phases, at intersection $i$. An admissible phase at intersection $i$ is represented by an array with a length of $|\pazocal{M}^{vi}|+|\pazocal{M}^{pi}|$, in which each element is a binary variable indicating if the associated movement is served by the phase. Let $S^i(t)$ be the phase that is activated at intersection $i$ at time $t$ and $S^{\mathrm{v},i}_{h,i,j}(t)$ and $S^{\mathrm{p},i}_{i_\alpha, i_\beta}(t)$ be the elements associated with the vehicle movement $(h,i,j)$ and the pedestrian crosswalk movement $(i_\alpha, i_\beta)$ in $S^i(t)$, respectively. For simplicity, we also use the symbols $S^{\mathrm{v}}_{h,i,j}(t)$ and $S^{\mathrm{p}}_{i_\alpha, i_\beta}(t)$ without the superscript for intersection index, which is unique for both vehicle and crosswalk movements, to represent whether the vehicle movement $(h,i,j)$ and the pedestrian crosswalk movement $(i_\alpha, i_\beta)$ are served at time $t$.

\begin{figure}[htb!]
	\includegraphics[width=3.5in]{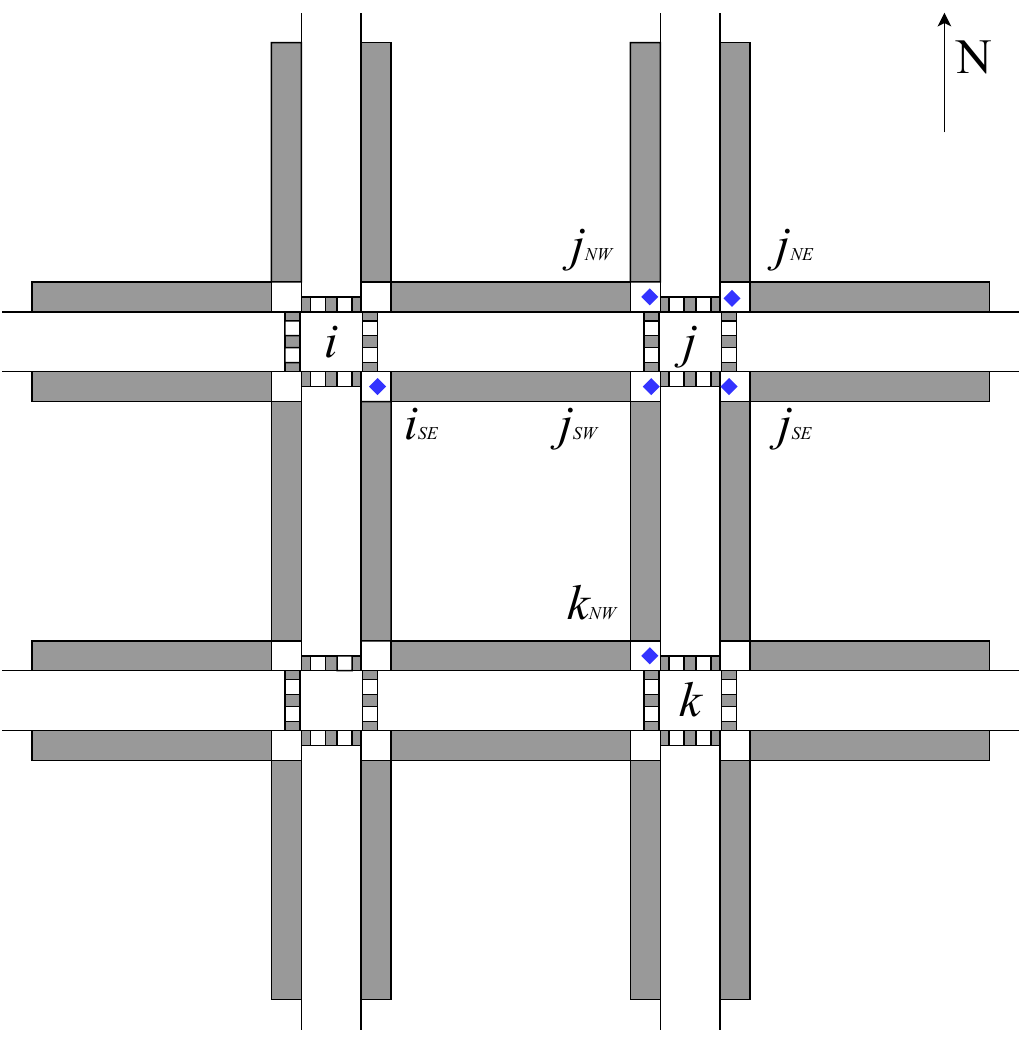}
	\centering
	\caption{Pedestrian network.}
	\label{fig:ped_net}
\end{figure}

\subsection{Vehicle traffic dynamics}
Most MP algorithms in the literature---e.g., \citep{sun2018simulation, varaiya2013max} use a store-and-forward point queue model for the vehicle traffic dynamics in which the maximum occupancy of each link is assumed to be infinite, and this model was observed to be effective in micro-simulations comparing MP to existing signal timings \citep{levin2020max, barman2022performance}. Therefore, this paper use the same model for the vehicle dynamics modeling. Specifically, the evolution of the number of vehicles on link $(h,i,j)$ can be expressed as:

\begin{equation}\label{eq:veh_transition}
\begin{split}
    x^{\mathrm{v}}_{h,i,j}(t+1) = & x^{\mathrm{v}}_{h,i,j}(t)-\min(C^{\mathrm{v}}_{h,i,j}(t)S^{\mathrm{v},i}_{h,i,j}(t),x^{\mathrm{v}}_{h,i,j}(t))\\
    &+\sum_{g\in \pazocal{U}^h}\min(C^{\mathrm{v}}_{g,h,i}(t)S^{\mathrm{v},i}_{g,h,i}(t),x^{\mathrm{v}}_{g,h,i}(t))R^{\mathrm{v}}_{h,i,j}(t+1)\mathbbm{1}_{\pazocal{L}_{\mathrm{in}}}(h,i)\\
    &+d^{\mathrm{v}}_{h,i,j}\mathbbm{1}_{\pazocal{L}_\mathrm{entry}}(h,i)
\end{split}
\end{equation}
where $x^{\mathrm{v}}_{h,i,j}(t)$ is the number of vehicles of movement $(h,i,j)$, i.e., the number of vehicles waiting on link $(h,i)$ to join link $(i,j)$, at time $t$; $C^{\mathrm{v}}_{h,i,j}$ is the stochastic and bounded saturation flow for movement $(h,i,j)$ at time step $t$ with mean value $c^{\mathrm{v}}_{h,i,j}$; $R^{\mathrm{v}}_{g,h,i}$ is the random turning ratio from link $(g,h)$ to link $(h,i)$ with mean value $r^{\mathrm{v}}_{g,h,i}$; $\mathbbm{1}_A(x)$ is the indicator function. The second term in Equation \eqref{eq:veh_transition} is the outflows of movement $(h,i,j)$ during time interval $(t, t+1)$; the third term is the inflows from upstream movements if link $(h,i)$ is an internal link; and, the fourth term is the external demand if link $(h,i)$ is an entry link. Note that the second and third terms in Equation \eqref{eq:veh_transition} imply that the time step size is equal to the free-flow travel time of the link. The long links in the network can be divided into shorter sub-links to satisfy this condition \citep{varaiya2013max}. 

\subsection{Pedestrian dynamics}
Pedestrians are divided into two groups based on their locations: pedestrians waiting at a pedestrian node to cross the intersection---i.e., pedestrian movement between two pedestrian nodes at the same intersection---and pedestrians movement on the sidewalk. For simplicity, the former group is called crosswalk movements while the latter is called sidewalk movements. The following assumptions for pedestrian dynamics are made:
\begin{enumerate}
   \item Pedestrians enter or exit the network from sidewalks;
   \item The maximum occupancy for pedestrian nodes is infinite, so pedestrians cannot be blocked on sidewalks;
   \item Pedestrians on sidewalks always walk at the free-flow speed;
   \item Pedestrians crossing an intersection from a pedestrian node will not go back to the same pedestrian node immediately.
\end{enumerate}

Following these assumptions, the evolution of pedestrians of a crosswalk movement can be expressed as:

\begin{equation}\label{eq:ped_cross_transition}
    \begin{split}
        x^{\mathrm{\mathrm{pc}}}_{i_{\alpha},i_{\beta}}(t+1) = & x^{\mathrm{pc}}_{i_{\alpha},i_{\beta}}(t)-\min(C^{\mathrm{p}}_{i_{\alpha},i_{\beta}}(t)S^{\mathrm{p},i}_{i_{\alpha},i_{\beta}}(t),x^{\mathrm{pc}}_{i_{\alpha},i_{\beta}}(t))\\
        & +\sum_{i_\gamma\in \pazocal{O}^{i_{\alpha}}\setminus{\{i_\beta\}}}\min(C^{\mathrm{p}}_{i_{\gamma},i_{\alpha}}(t)S^{\mathrm{p},i}_{i_{\gamma},i_{\alpha}}(t),x^{\mathrm{pc}}_{i_{\gamma},i_{\alpha}}(t))R^{\mathrm{p}}_{i_{\gamma},i_{\alpha},i_{\beta}}(t+1)\\
        & +\sum_{j_{\eta} \in \pazocal{U}^{i_{\alpha}}}\psi(\mathbf{V}(j_{\eta}, i_{\alpha}))R^{\mathrm{p}}_{j_{\eta},i_{\alpha},i_{\beta}}
    \end{split}
\end{equation}
where $C^{\mathrm{p}}_{i_{\alpha},i_{\beta}}(t)$ is the stochastic saturation flow of crosswalk movement $(i_{\alpha},i_{\beta})$ at time $t$; $R^{\mathrm{p}}_{i_{\gamma},i_{\alpha},i_{\beta}}(t+1)$ is the turning ratio from $(i_{\gamma},i_{\alpha})$ to $(i_{\alpha},i_{\beta})$. The second term is the outflow of crosswalk movement $(i_{\alpha},i_{\beta})$; the third term is the inflow from the adjacent pedestrian node; $\psi$ is a function mapping the distribution of pedestrians on a sidewalk to the number of pedestrians that are going to join movement $(i_{\alpha},i_{\beta})$. Since the walking speed of pedestrians is much slower than the free flow travel speed of vehicles, pedestrians on a sidewalk can arrive the downstream node within one time step only if their distance to the downstream node is shorter than the walking travel distance in one time step. Specifically, $\psi$ can be expressed as:

\begin{equation}
    \psi(\mathbf{V}(j_{\eta}, i_{\alpha}))=\sum_{v\in [1,n^{\mathrm{v}}(j_{\eta}, i_{\alpha})]}\mathbbm{1}_{b(v)\le b_f(v)}
\end{equation}
where $n^{\mathrm{v}}(j_{\eta}, i_{\alpha})$ is the number of pedestrians on $(j_{\eta}, i_{\alpha})$; $b(v)$ is the distance between the $v$th pedestrian to the downstream node; $b_{f(v)}$ is the travel distance of pedestrian $v$ in a time step.

The number of pedestrians of sidewalk movement $(i_{\alpha},j_{\beta})$ can be expressed as,

\begin{equation}\label{eq:ped_side_transition}
\begin{split}
    x^{\mathrm{ps}}_{i_{\alpha},j_{\beta}}(t+1) = & x^{\mathrm{ps}}_{i_{\alpha},j_{\beta}}(t)-\psi(\mathbf{V}(i_{\alpha},j_{\beta}))(t)-q^{\mathrm{out}}_{i_{\alpha},j_{\beta}}(t) + q^{\mathrm{in}}_{i_{\alpha},j_{\beta}}(t) \\
    & +\sum_{i_{\gamma} \in \pazocal{O}^{i_\alpha}}\min(C^{\mathrm{p}}_{i_{\gamma},i_{\alpha}}(t)S^{\mathrm{p}}_{i_{\gamma},i_{\alpha}}(t),x^{\mathrm{pc}}_{i_{\gamma},i_{\alpha}}(t))R^{\mathrm{p}}_{i_{\gamma},i_{\alpha},j_{\beta}}(t+1) \\
    & + \sum_{h_\gamma\in \pazocal{U}^{i_{\alpha}}\setminus{\{j_\beta\}}}\psi(\mathbf{V}(h_{\gamma}, i_{\alpha}))R^{\mathrm{p}}_{h_{\gamma},i_{\alpha},j_{\beta}}
\end{split}
\end{equation}

where $q^{\mathrm{in}}_{i_{\alpha},j_{\beta}}(t)$ and $q^{\mathrm{out}}_{i_{\alpha},j_{\beta}}(t)$ are the number of enter and exit pedestrians on sidewalk $(i_{\alpha},j_{\beta})$ at time $t$, respectively; the fifth term is the inflow from adjacent pedestrian nodes at the same intersection; the sixth term is the inflow from the upstream sidewalk.

\section{Max pressure algorithm incorporating pedestrians}\label{sec:pqmp}
\subsection{Signal control algorithm}
Following the traffic dynamics models, this section derives a step-based MP algorithm that incorporates pedestrian queues into pressure calculation. The proposed algorithm is referred to as PQ-MP in the remainder of this paper. The signal phase at all intersections is updated at a fixed frequency. Like most step-based MP algorithms in the literature, the proposed PQ-MP works in the following four steps.
\begin{enumerate}
    \item At each signal update instant $t$, measure the number of vehicles of all vehicle movements and the number of pedestrians at each pedestrian node;
    
    \item Define the \emph{weight} of a vehicle movement (pedestrian crosswalk movement) as the difference between the number of vehicles (pedestrians) of this movement and the average number of vehicles (pedestrians) over all downstream vehicle movement (pedestrian crosswalk movement) using the turning ratio as the proportion. Specifically, both types of weights can be expressed as Equation \eqref{eq:veh_weight} and Equation \eqref{eq:ped_weight}, respectively. Note that since we assume pedestrians on sidewalks do not incur delay, this group of pedestrians is not considered in the weight definition. In addition, for pedestrian crosswalk movement, we do not consider the downstream sidewalk movement in the weight calculation. In addition, since we assume the pedestrians from node $i_\alpha$ to node $i_\beta$ will not go back to node $i_\alpha$ immediately, there is only one downstream crosswalk movement in Equation \eqref{eq:ped_weight}. 
    
    \begin{equation}\label{eq:veh_weight}
        w^{\mathrm{v}}_{h,i,j}(t) = x^{\mathrm{v}}_{h,i,j}(t)-\sum_{k\in \pazocal{D}^j}x^{\mathrm{v}}_{i,j,k}(t)R_{i,j,k}(t)
    \end{equation}
    \begin{equation}\label{eq:ped_weight}
        w^{\mathrm{p}}_{i_{\alpha},i_{\beta}}(t) = x^{\mathrm{pc}}_{i_{\alpha},i_{\beta}}(t)-x^{\mathrm{pc}}_{i_{\beta},i_{\gamma}}(t)R^{\mathrm{p}}_{i_{\alpha},i_{\beta},i_{\gamma}}
    \end{equation}

    \item Compute the pressure for the phase array $\mathbf{S}^i(t)$ at intersection $i$, as:
    
    \begin{equation}\label{eq:pressure}
        p(\mathbf{S}^i(t))=\sum_{(h,i,j)\in \pazocal{M}^{vi}}w^{\mathrm{v}}_{h,i,j}(t)\tilde{C}^{\mathrm{v}}_{h,i,j}(t)S^{\mathrm{v},i}_{h,i,j}(t)+\lambda\sum_{(i_{\alpha},i_{\beta})\in \pazocal{M}^{pi}}w^{\mathrm{p}}_{i_{\alpha},i_{\beta}}(t)C^{\mathrm{p}}_{i_{\alpha},i_{\beta}}(t)S^{\mathrm{p},i}_{i_{\alpha},i_{\beta}}(t)
    \end{equation}
    
    It is assumed that the right-turn vehicles need to yield to conflicting pedestrian movements served by the same phase. Therefore, the saturation flow for an activated right-turn vehicle movement needs to be adjusted if the conflicting pedestrian crosswalk movements are being served as well. Specifically, the portion of a time step that is required to serve the associated pedestrian cross movement cannot be utilized by the right-turn vehicle movement, so the same proportion needs to be subtracted from the original saturation flow of the right-turn vehicle movements. $\tilde{C}^{\mathrm{v}}_{h,i,j}(t)$ in Equation \eqref{eq:pressure} is the saturation flow of vehicles considering the impact from pedestrians, and it can be expressed as:
    
    \begin{equation}\label{eq:saturation_tilde}
        \tilde{C}^{\mathrm{v}}_{h,i,j}(t)=C^{\mathrm{v}}_{h,i,j}(t)\left(1-\min\left(1, \max_{(i_\alpha, i_\beta)\in \pazocal{M}^{\mathrm{p}}_{h,i,j}}\frac{ x^{\mathrm{pc}}_{i_\alpha, i_\beta}(t)S^{\mathrm{p},i}_{i_\alpha, i_\beta}(t)}{C^{\mathrm{p}}_{i_{\alpha},i_{\beta}}(t)}\right)\right)
    \end{equation}
    where $C^{\mathrm{v}}_{h,i,j}(t)$ is the saturation flow of vehicle movement $(h,i,j)$ without pedestrians, and $\pazocal{M}^{\mathrm{p}}_{h,i,j}$ is the set of pedestrian movements that conflict with vehicle movement $(h,i,j)$. The term of $\frac{ x^{\mathrm{pc}}_{i_\alpha, i_\beta}(t)S^{\mathrm{p},i}_{i_\alpha, i_\beta}(t)}{C^{\mathrm{p}}_{i_{\alpha},i_{\beta}}(t)}$ is the proportion of a time step that is required to clear the pedestrian queue of movement $(i_\alpha, i_\beta)$. Note, since pedestrians are allowed to travel in both directions on a crosswalk, each right-turn vehicle movement has two conflicting crosswalk movements. Therefore, $\min\left(1, \max_{(i_\alpha, i_\beta)\in \pazocal{M}^{\mathrm{p}}_{h,i,j}}\frac{ x^{\mathrm{pc}}_{i_\alpha, i_\beta}(t)S^{\mathrm{p},i}_{i_\alpha, i_\beta}(t)}{C^{\mathrm{p}}_{i_{\alpha},i_{\beta}}(t)}\right)$ is the portion that needs to be subtracted from the saturation flow of the corresponding right-turn movement. 

    In general, the travel speed of vehicles is higher than the walking speed of pedestrians; however, the density of pedestrian stream is higher than the critical density of vehicle flows, and this excess in density usually outweighs the speed shortage. As a result, the saturation flow for pedestrians is significantly higher than that of vehicles. Therefore, without a coefficient for the pedestrian weight, Equation \eqref{eq:pressure} will weigh pedestrian crosswalk movements considerably more over vehicle movements and assign excessive green time to pedestrians. Consequently, the delay incurred by vehicles, especially the left-turn vehicles, could be extremely high. To address this issue, we added an coefficient $0<\lambda<1$ for the second term of Equation \eqref{eq:pressure}. 
    
    \item At each intersection, the phase with the maximum pressure is activated for the next time step,

    \begin{equation}\label{eq:opt}
        S^*_i(t)=\argmax_{S^{i}(t)\in \pazocal{S}^i}p(S^{i}(t))
    \end{equation}
\end{enumerate}

\subsection{Maximum stability property}
One desirable property of MP-based algorithms is the maximum stability, which suggests that  they can accommodate a demand if this demand can be accommodated by any feasible control strategy. This section proves this property holds for the PQ-MP.

\subsubsection{Stable region}
Let $\mathbf{d}^{\mathrm{v}}$ indicate the vector of (average) vehicular demand from entry links, $\mathbf{q}^{\mathrm{in}}$ indicate the vector of (average) pedestrian generation rate from all sidewalks, and $\mathbf{q}^{\mathrm{out}}$ indicate the vector of (average) pedestrian exit rate from all sidewalks. In addition, let $\mathbf{f}^{\mathrm{v}}$ indicate the vector of (average) vehicular demand on all internal links, $\mathbf{f}^{\mathrm{ps}}$ indicate the vector of (average) pedestrian demand on sidewalks, and $\mathbf{f}^{\mathrm{pc}}$ indicate the vector of (average) pedestrian demand for crosswalks. $\mathbf{f}^{\mathrm{v}}$, $\mathbf{f}^{\mathrm{ps}}$ and $\mathbf{f}^{\mathrm{pc}}$ can be uniquely determined by $\mathbf{d}^{\mathrm{v}}$, $\mathbf{q}^{\mathrm{in}}$, $\mathbf{q}^{\mathrm{out}}$ and turning ratios for both vehicles and pedestrians \citep{hao2018model}. It is easy to obtain the expressions in Equations \eqref{eq:fv_dv}--\eqref{eq:fps_dp}.

\begin{subequations}\label{eq:fv_dv}
	\begin{empheq}[left={f^{\mathrm{v}}_{l,m,n}=\empheqlbrace\,}]{alignat=2}
	& d^{\mathrm{v}}_{l,m,n}, && \quad \text{if $(l,m)\in \pazocal{L_{\mathrm{enter}}}$}\\
	&\sum_{k\in \pazocal{U}^l}f^{\mathrm{v}}_{k,l,m}r^{\mathrm{v}}_{l,m,n} && \quad \text{otherwise}
	\end{empheq}
\end{subequations}

\begin{equation}\label{eq:fpc_dp}
    f^{\mathrm{pc}}_{i_\alpha, i_\beta}=\sum_{i_\gamma\in \pazocal{O}^{i_{\alpha}}\setminus{\{i_\beta\}}}f^{\mathrm{pc}}_{i_\gamma, i_\alpha}r^{\mathrm{p}}_{i_{\gamma},i_{\alpha},i_{\beta}}+\sum_{j_{\eta} \in \pazocal{U}^{i_{\alpha}}}f^{\mathrm{ps}}_{j_\eta, i_\alpha}r^{\mathrm{p}}_{j_{\eta},i_{\alpha},i_{\beta}}, \quad \forall (i_\alpha, i_\beta)
\end{equation}

\begin{equation}\label{eq:fps_dp}
    f^{\mathrm{ps}}_{i_\alpha, j_\beta}=q^{\mathrm{in}}_{i_{\alpha},j_{\beta}} - q^{\mathrm{out}}_{i_{\alpha},j_{\beta}}+\sum_{i_{\gamma} \in \pazocal{O}^{i_\alpha}}f^{\mathrm{pc}}_{i_\gamma, i_\alpha}r^{\mathrm{p}}_{i_{\gamma},i_{\alpha},j_{\beta}}+\sum_{h_\gamma\in \pazocal{U}^{i_{\alpha}}\setminus{\{j_\beta\}}}f^{\mathrm{ps}}_{h_{\gamma},i_{\alpha}}r^{\mathrm{p}}_{h_{\gamma},i_{\alpha},j_{\beta}}, \quad \forall (i_\alpha, j_\beta)
\end{equation}

The convex hull of all admissible phases at intersection $i$ can be expressed as:

\begin{equation}\label{eq:ch_phase}
    co(\pazocal{S}^i)=\{\sum_{\mathbf{S}^{i,e}\in \pazocal{S}^i}\pi_e\mathbf{S}^{i,e}|\pi_e\ge 0, \sum_{e}\pi_e=1\}
\end{equation}

\begin{definition}\label{def:feasible}
    A demand $(\mathbf{d}^{\mathrm{v}}, \mathbf{p}^{\mathrm{in}})$ is feasible if there exists a control vector $\boldsymbol{\Sigma}\in co(\pazocal{S})$ such that

    \begin{equation}\label{eq:stable_region_sw}
        f^{\mathrm{ps}}_{i_{\alpha}, j_{\beta}}\le c^{\mathrm{ps}}_{i_{\alpha}, j_{\beta}},\quad \forall (i_{\alpha}, j_{\beta})
    \end{equation}
    \begin{equation}\label{eq:stable_region_veh}
        f^{\mathrm{v}}_{l,m,n}\le \tilde{c}^{\mathrm{v}}_{l,m,n}\Sigma^\mathrm{v}_{l,m,n}, \quad \forall (l,m,n)
    \end{equation}
    \begin{equation}\label{eq:stable_region_cw}
        f^{\mathrm{pc}}_{i_{\alpha},i_{\beta}}\le c^{\mathrm{p}}_{i_{\alpha},i_{\beta}}\Sigma^\mathrm{p}_{i_{\alpha},i_{\beta}}, \quad \forall (i,\alpha,\beta)
    \end{equation}
    where $\tilde{c}^{\mathrm{v}}_{l,m,n}$ and $c^{\mathrm{p}}_{i_{\alpha},i_{\beta}}$ are the mean of the saturation flows for vehicle movement and pedestrian crosswalk movement, respectively. 
\end{definition}

It is easy to prove that a control sequence $\{\mathbf{S}(t), t=0, T, 2T, 3T,...\}$ is admissible if and only if

\begin{equation}
    \liminf\limits_{N\rightarrow \infty}\frac{1}{N}\sum_{n=0}^{N}\mathbf{S}^i(nT) \in co(\pazocal{S}^i), \quad \forall i \in \pazocal{N}
\end{equation}

Therefore, Definition \ref{def:feasible} indicates that a control sequence can accommodate a demand, or equivalently, a demand can be stabilized by the control sequence, if the average service rate is not lower than the average demand for all movements in the network. Equation \eqref{eq:stable_region_veh} and Equation \eqref{eq:stable_region_cw} indicate the vehicle and pedestrian demand for crosswalks are lower than the corresponding average service rates. Note that if Equation \ref{eq:stable_region_cw} holds, the third and fourth terms in Equation \ref{eq:stable_region_sw} are independent of the signal control strategy and uniquely determined by $\mathbf{q}^{\mathrm{in}}$, $\mathbf{q}^{\mathrm{out}}$ and $\mathbf{r}^{\mathrm{p}}$. Then, Equation \eqref{eq:stable_region_sw} indicates that the sum of pedestrian generating rate and the external demand from adjacent sidewalks and cross movements subtracted by the exit rate on sidewalks should be lower than the corresponding saturation flow to make the pedestrian demand feasible. This relationship is independent of signal timings under the assumptions made in the previous section. The stable region of demand, denoted by $(\pazocal{D}^{\mathrm{v}}, \pazocal{Q}_{\mathrm{in}})$, is defined as the set of demand that satisfies Equations \eqref{eq:stable_region_sw}-\eqref{eq:stable_region_cw}. Let $\pazocal{D}^0$ and $\pazocal{Q}_{\mathrm{in}}^0$ denote the interior of $\pazocal{D}^{\mathrm{v}}$ and $\pazocal{Q}_{\mathrm{in}}$, respectively. 

\subsubsection{Maximum stability}
\begin{definition}
    A signal control policy $\pazocal{S}$ stabilizes the queue process in the mean if for some $M^{\mathrm{v}}<\infty$ and $M^{\mathrm{p}}<\infty$,
    
    \begin{equation}\label{eq:def_ms_v}
        \frac{1}{T}\sum_{t=1}^T\sum_{l,m,n}E(x^{\mathrm{v}}_{l,m,n}(t))\le M^{\mathrm{v}}, \quad T=1,2,3...
    \end{equation}
    \begin{equation}\label{eq:def_ms_p}
        \frac{1}{T}\sum_{t=1}^T\sum_{i,\alpha,\beta}E(x^{\mathrm{pc}}_{i_{\alpha},i_{\beta}}(t))\le M^{\mathrm{p}}, \quad T=1,2,3...
    \end{equation}
\end{definition}
As mentioned before, the average number of pedestrians on sidewalks is stabilized regardless of the signal control strategy if the demand is in the stable region. The combination of Equation \eqref{eq:def_ms_v} and Equation \eqref{eq:def_ms_p} is equivalent to 

\begin{equation}\label{eq:def_ms}
    \frac{1}{T}\sum_{t=1}^T\left[\sum_{l,m,n}E(x^{\mathrm{v}}_{l,m,n}(t))+\sum_{i,\alpha,\beta}E(x^{\mathrm{pc}}_{i_{\alpha},i_{\beta}}(t))\right]\le M, \quad T=1,2,3...
\end{equation}
where $M<\infty$.

\begin{theorem}[Maximum stability]\label{theorem:ms}
    The proposed PQ-MP algorithm Equations \eqref{eq:veh_weight}--\eqref{eq:opt} stabilizes the queue process if $(\mathbf{d}^{\mathrm{v}}, \mathbf{p}^{\mathrm{in}})\in (\pazocal{D}^0, \pazocal{Q}_{\mathrm{in}}^0)$ and if the distribution of the adjusted saturation flow, $\tilde{C}^{\mathrm{v}}_{h,i,j}(t)$ in Equation \eqref{eq:pressure}, is fixed for all right-turn vehicle movements.
\end{theorem}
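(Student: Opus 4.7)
The plan is to prove Theorem~\ref{theorem:ms} by a Foster--Lyapunov drift argument, extending the vehicle-only proof of \citep{varaiya2013max} to the joint vehicle--pedestrian queue process. First, I would pick the quadratic Lyapunov function
\[
V(t) = \sum_{(h,i,j)} \bigl(x^{\mathrm{v}}_{h,i,j}(t)\bigr)^{2} + \lambda \sum_{(i_{\alpha},i_{\beta})} \bigl(x^{\mathrm{pc}}_{i_{\alpha},i_{\beta}}(t)\bigr)^{2},
\]
using the same coefficient $\lambda$ that appears in the pressure \eqref{eq:pressure}. Sidewalk queues are deliberately omitted from $V$: by assumption~(2) of Section~\ref{sec:dynamics} they cannot be blocked, and \eqref{eq:stable_region_sw} together with $(\mathbf{d}^{\mathrm{v}},\mathbf{p}^{\mathrm{in}})\in(\pazocal{D}^{0},\pazocal{Q}_{\mathrm{in}}^{0})$ already makes them mean-stable regardless of the signal policy.

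Next, I would compute the conditional drift $\Delta V(t):=E[V(t+1)-V(t)\mid x(t)]$ by expanding $(a+b)^{2}-a^{2}=2ab+b^{2}$ on the dynamics \eqref{eq:veh_transition} and \eqref{eq:ped_cross_transition}. The quadratic increment terms are absorbed into a bounded constant $K$ since saturation flows, turning ratios, exogenous demands and the sidewalk-arrival function $\psi$ are all bounded. The cross (linear) terms, after the standard re-indexing of upstream/downstream movements and an application of the flow-conservation identities \eqref{eq:fv_dv}--\eqref{eq:fps_dp}, collapse through \eqref{eq:veh_weight}--\eqref{eq:ped_weight} into
\[
\Delta V(t) \le K - 2\,p(\mathbf{S}^{*}(t)) + 2\sum_{(h,i,j)} w^{\mathrm{v}}_{h,i,j}(t)\,f^{\mathrm{v}}_{h,i,j} + 2\lambda \sum_{(i_{\alpha},i_{\beta})} w^{\mathrm{p}}_{i_{\alpha},i_{\beta}}(t)\,f^{\mathrm{pc}}_{i_{\alpha},i_{\beta}}.
\]
The fixed-distribution hypothesis on $\tilde{C}^{\mathrm{v}}_{h,i,j}(t)$ is used here so that its pathwise appearance in \eqref{eq:saturation_tilde} may be replaced by the mean $\tilde{c}^{\mathrm{v}}_{h,i,j}$ inside the conditional expectation, reproducing the pressure $p(\mathbf{S}^{*}(t))$ defined in \eqref{eq:pressure}.

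The third step invokes the PQ-MP rule \eqref{eq:opt}: because $\mathbf{S}^{*}(t)$ maximises the (now linear) pressure over $\pazocal{S}^{i}$ at each intersection, it also maximises it over $co(\pazocal{S}^{i})$. By interior feasibility there exist $\epsilon>0$ and $\boldsymbol{\Sigma}\in co(\pazocal{S})$ with $f^{\mathrm{v}}_{h,i,j}\le(1-\epsilon)\tilde{c}^{\mathrm{v}}_{h,i,j}\Sigma^{\mathrm{v}}_{h,i,j}$ and $f^{\mathrm{pc}}_{i_{\alpha},i_{\beta}}\le(1-\epsilon)c^{\mathrm{p}}_{i_{\alpha},i_{\beta}}\Sigma^{\mathrm{p}}_{i_{\alpha},i_{\beta}}$. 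Substituting $p(\mathbf{S}^{*}(t))\ge p(\boldsymbol{\Sigma})$ and using these slack inequalities, the combination $-p(\boldsymbol{\Sigma})+\langle w^{\mathrm{v}},f^{\mathrm{v}}\rangle+\lambda\langle w^{\mathrm{p}},f^{\mathrm{pc}}\rangle$ is bounded above by $-\epsilon$ times a positive multiple of $\sum x^{\mathrm{v}}+\lambda\sum x^{\mathrm{pc}}$. Telescoping the resulting Foster-type inequality $\Delta V(t)\le K-2\epsilon\bigl(\sum x^{\mathrm{v}}+\lambda\sum x^{\mathrm{pc}}\bigr)$ from $t=1$ to $T$, taking total expectations and using $V(T+1)\ge 0$ delivers \eqref{eq:def_ms}.

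The main obstacle is the coupling between vehicle and pedestrian queues induced by $\tilde{C}^{\mathrm{v}}_{h,i,j}(t)$. Without the theorem's fixed-distribution hypothesis the adjusted saturation flow is a nonlinear function of the pedestrian state at time $t$ through the $\min$/$\max$ in \eqref{eq:saturation_tilde}, so the pressure would no longer be linear in the control and the convex-hull optimality step would break. Granting that hypothesis, the remaining bookkeeping---in particular the rewriting of $\langle w,f\rangle-p(\boldsymbol{\Sigma})$ as a nonpositive multiple of the raw queue vector---is a routine adaptation of the vehicle-only argument in \citep{varaiya2013max}, with the additional pedestrian sums handled symmetrically thanks to the parallel structure of \eqref{eq:veh_weight} and \eqref{eq:ped_weight}.
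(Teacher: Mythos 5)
Your overall architecture coincides with the paper's proof: the same Lyapunov function $\sum_{(h,i,j)}(x^{\mathrm{v}}_{h,i,j})^{2}+\lambda\sum_{(i_{\alpha},i_{\beta})}(x^{\mathrm{pc}}_{i_{\alpha},i_{\beta}})^{2}$, the same split of the conditional drift into bounded quadratic terms plus the linear term $\langle w^{\mathrm{v}},f^{\mathrm{v}}\rangle+\lambda\langle w^{\mathrm{p}},f^{\mathrm{pc}}\rangle-p(\mathbf{S}^{*}(t))$, the same use of the fixed-distribution hypothesis on $\tilde{C}^{\mathrm{v}}$ (without it the pressure \eqref{eq:pressure} depends on $S^{\mathrm{p}}$ through \eqref{eq:saturation_tilde} and is no longer linear in the control, exactly as you note), and the same telescoping step to reach \eqref{eq:def_ms}. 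The gap is at the comparison step, and it is a genuine one, not bookkeeping.

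You invoke interior feasibility as a uniform proportional slack, $f^{\mathrm{v}}_{h,i,j}\le(1-\epsilon)\tilde{c}^{\mathrm{v}}_{h,i,j}\Sigma^{\mathrm{v}}_{h,i,j}$ and $f^{\mathrm{pc}}_{i_{\alpha},i_{\beta}}\le(1-\epsilon)c^{\mathrm{p}}_{i_{\alpha},i_{\beta}}\Sigma^{\mathrm{p}}_{i_{\alpha},i_{\beta}}$, and then claim that $\langle w^{\mathrm{v}},f^{\mathrm{v}}\rangle+\lambda\langle w^{\mathrm{p}},f^{\mathrm{pc}}\rangle-p(\boldsymbol{\Sigma})$ is bounded above by $-\epsilon$ times a positive multiple of the total queue. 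This fails because the weights \eqref{eq:veh_weight}--\eqref{eq:ped_weight} can be negative (a downstream queue may exceed the upstream one). Movement by movement the expression reads $\sum_{l,m,n}\bigl[f^{\mathrm{v}}_{l,m,n}-\tilde{c}^{\mathrm{v}}_{l,m,n}\Sigma^{\mathrm{v}}_{l,m,n}\bigr]w^{\mathrm{v}}_{l,m,n}+\lambda\sum_{i,\alpha,\beta}\bigl[f^{\mathrm{pc}}_{i_{\alpha},i_{\beta}}-c^{\mathrm{p}}_{i_{\alpha},i_{\beta}}\Sigma^{\mathrm{p}}_{i_{\alpha},i_{\beta}}\bigr]w^{\mathrm{p}}_{i_{\alpha},i_{\beta}}$; under your slack condition each bracket is $\le 0$, so every term with a negative weight is \emph{nonnegative} and can be arbitrarily large, and no negative drift can be extracted. (Equivalently, re-indexing the same quantity against raw queue lengths and taking your slack with equality gives $-\tfrac{\epsilon}{1-\epsilon}\sum_{l,m,n}x^{\mathrm{v}}_{l,m,n}d^{\mathrm{v}}_{l,m,n}$: the drift is strictly negative only for entry-link queues, and vanishes for every internal queue with no exogenous demand, so internal queues are not controlled at all.) The paper avoids this with the sign-dependent comparison control of Equation \eqref{eq:ineq1}: choose $\mathbf{S}\in co(\pazocal{S})$ so that movements with nonnegative weight receive service rate at least $f+\epsilon$, while movements with negative weight receive \emph{zero} service, which turns those terms into $f\cdot w\le 0$ and yields the bound $-\epsilon_{1}|\mathbf{w}^{\mathrm{v}}|-\lambda\epsilon_{2}|\mathbf{w}^{\mathrm{p}}|$ before converting $|\mathbf{w}|$ to $|\mathbf{X}|$. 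This sign-dependent construction, inherited from \citep{varaiya2013max}, is the key device your proposal is missing; without it the chain from interior feasibility to the Foster inequality breaks.
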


\begin{proof}
The first step is to prove that there exist $\epsilon_1>0$ and $\epsilon_2>0$ such that the Lyapunov functions $|\mathbf{X}^\mathrm{v}(t)|^2$ and $|\sqrt{\lambda}\mathbf{X}^{\mathrm{pc}}(t)|^2$ under the control of PQ-MP satisfy the following inequality:

\begin{equation}\label{eq:key}
    E\{|\mathbf{X}^\mathrm{v}(t+1)|^2-|\mathbf{X}^\mathrm{v}(t)|^2+|\sqrt{\lambda}\mathbf{X}^{\mathrm{pc}}(t+1)|^2-|\sqrt{\lambda}\mathbf{X}^{\mathrm{pc}}(t)|^2|\mathbf{X}^\mathrm{v}(t),\mathbf{X}^{\mathrm{pc}}(t)\}\le k-\epsilon_1|\mathbf{X}^\mathrm{v}(t)|-\epsilon_2|\mathbf{X}^{\mathrm{pc}}(t)|
\end{equation}
where $|\mathbf{X}|\equiv\sum_{x_i \in \mathbf{X}}x_i$,  $|\mathbf{X}|^2\equiv\sum_{x_i \in \mathbf{X}}(x_i)^2$, and $|\sqrt{\lambda}\mathbf{X}|^2\equiv\lambda\sum_{x_i \in \mathbf{X}}(x_i)^2$.

Using Eqs. \eqref{eq:veh_transition} and \eqref{eq:saturation_tilde}, we obtain:

\begin{equation}\label{eq:diff}
\begin{split}
    \delta^{\mathrm{v}}_{l,m,n}(t)=&{x^{\mathrm{v}}_{l,m,n}(t+1)}-x^{\mathrm{v}}_{l,m,n}(t)\\
    =&-\min(\tilde{C}^{\mathrm{v}}_{l,m,n}(t)S^{\mathrm{v}}_{l,m,n}(t),x^{\mathrm{v}}_{l,m,n}(t))\\
    &+\sum_k\min(\tilde{C}^{\mathrm{v}}_{k,l,m}(t)S^{\mathrm{v}}_{k,l,m}(t),x^{\mathrm{v}}_{k,l,m}(t))R^{\mathrm{v}}_{l,m,n}(t+1)\mathbbm{1}_{\pazocal{L}_{\mathrm{in}}}(l,m)\\
    &+d^{\mathrm{v}}_{l,m,n}\mathbbm{1}_{\pazocal{L}_\mathrm{enter}}(l,m)
\end{split}
\end{equation}

Equation \eqref{eq:diff} leads to

\begin{equation}\label{eq:diff_veh}
 |\mathbf{X}^{\mathrm{v}}(t+1)|^2-|\mathbf{X}^{\mathrm{v}}(t)|^2=2\mathbf{X}^{\mathrm{v}}(t)^T\mathbf{\delta}^{\mathrm{v}}(t)+|\mathbf{\delta}^{\mathrm{v}}(t)|^2=2\omega^{\mathrm{v}}+\chi^{\mathrm{v}}
\end{equation}
where

\begin{equation}\label{eq:beta_v}
\begin{split}
    \chi^{\mathrm{v}}=&\sum_{l,m,n}(x^{\mathrm{v}}_{l,m,n}(t+1)-x^{\mathrm{v}}_{l,m,n}(t))^2\\
    & = \sum_{l,m,n}\{-\min(\tilde{C}^{\mathrm{v}}_{l,m,n}(t)S^{\mathrm{v}}_{l,m,n}(t),x^{\mathrm{v}}_{l,m,n}(t))\\
    &+\sum_k\min(\tilde{C}^{\mathrm{v}}_{k,l,m}(t)S^{\mathrm{v}}_{k,l,m}(t),x^{\mathrm{v}}_{k,l,m}(t))R^{\mathrm{v}}_{l,m,n}(t+1)\mathbbm{1}_{\pazocal{L}_{\mathrm{in}}}(l,m)\\
    &+d^{\mathrm{v}}_{l,m,n}\mathbbm{1}_{\pazocal{L}_\mathrm{enter}}(l,m)\}^2
\end{split} 
\end{equation}
and
\begin{subequations}
\begin{align}
        \omega^{\mathrm{v}}=&\mathbf{X}^{\mathrm{v}}(t)^T\mathbf{\delta}^{\mathrm{v}}(t)\\
        =&-\sum_{l,m,n}x^{\mathrm{v}}_{l,m,n}(t)\min(\tilde{C}^{\mathrm{v}}_{l,m,n}(t)S^{\mathrm{v}}_{l,m,n}(t),x^{\mathrm{v}}_{l,m,n}(t))\\
    &+\sum_{l,m,n}x^{\mathrm{v}}_{l,m,n}(t)\sum_k\min(\tilde{C}^{\mathrm{v}}_{k,l,m}(t)S^{\mathrm{v}}_{k,l,m}(t),x^{\mathrm{v}}_{k,l,m}(t))R^{\mathrm{v}}_{l,m,n}(t+1)\mathbbm{1}_{\pazocal{L}_{\mathrm{in}}}(l,m)\\
    &+\sum_{l,m,n}x^{\mathrm{v}}_{l,m,n}(t)d^{\mathrm{v}}_{l,m,n}\mathbbm{1}_{\pazocal{L}_\mathrm{enter}}(l,m)\\
    =&-\sum_{l,m,n}x^{\mathrm{v}}_{l,m,n}(t)\min(\tilde{C}^{\mathrm{v}}_{l,m,n}(t)S^{\mathrm{v}}_{l,m,n}(t),x^{\mathrm{v}}_{l,m,n}(t))\\
    &+\sum_{m,n,k}x^{\mathrm{v}}_{m,n,k}(t)\sum_l\min(\tilde{C}^{\mathrm{v}}_{l,m,n}(t)S^{\mathrm{v}}_{l,m,n}(t),x^{\mathrm{v}}_{l,m,n}(t))R^{\mathrm{v}}_{m,n,k}(t+1)\mathbbm{1}_{\pazocal{L}_{\mathrm{in}}}(m,n)\\
    &+\sum_{l,m,n}x^{\mathrm{v}}_{l,m,n}(t)d^{\mathrm{v}}_{l,m,n}\mathbbm{1}_{\pazocal{L}_\mathrm{enter}}(l,m)\\
    =&-\sum_{l,m,n}\min(\tilde{C}^{\mathrm{v}}_{l,m,n}(t)S^{\mathrm{v}}_{l,m,n}(t),x^{\mathrm{v}}_{l,m,n}(t))\left(x^{\mathrm{v}}_{l,m,n}(t)-\sum_{k\in \pazocal{D}^n}x^{\mathrm{v}}_{m,n,k}(t)R^{\mathrm{v}}_{m,n,k}(t+1)\mathbbm{1}_{\pazocal{L}_{\mathrm{in}}}(m,n)\right)\\
    &+\sum_{l,m,n}x^{\mathrm{v}}_{l,m,n}(t)d^{\mathrm{v}}_{l,m,n}\mathbbm{1}_{\pazocal{L}_\mathrm{enter}}(l,m)
\end{align}
\end{subequations}

Note that the expression of $\omega^{\mathrm{v}}$ has the same expression in \citep{varaiya2013max}. Using Equation \eqref{eq:fv_dv} and the same manner as in \citep{varaiya2013max}, it is easy to obtain

\begin{equation}\label{eq:omega_v}
    E[\omega^{\mathrm{v}}|\mathbf{X^{\mathrm{v}}(t)}]=\omega^{\mathrm{v1}}+\omega^{\mathrm{v2}}
\end{equation}
where
\begin{equation}\label{eq:alpha1_v}
    \omega^{\mathrm{v1}}=\sum_{l,m,n}\left[f^{\mathrm{v}}_{l,m,n}-\tilde{c}^{\mathrm{v}}_{l,m,n}(t)S^{\mathrm{v}}_{l,m,n}(t)\right]w^{\mathrm{v}}_{l,m,n}(t+1)
\end{equation}
\begin{equation}
    \omega^{\mathrm{v2}}=\sum_{l,m,n}\left[\tilde{c}^{\mathrm{v}}_{l,m,n}(t)-E[\min(\tilde{C}^{\mathrm{v}}_{l,m,n}(t), x^{\mathrm{v}}_{l,m,n}(t))|\mathbf{X^{\mathrm{v}}(t)}]\right]S^{\mathrm{v}}_{l,m,n}(t)w^{\mathrm{v}}_{l,m,n}(t+1)
\end{equation}
where $\tilde{c}^{\mathrm{v}}_{l,m,n}(t) = E[\tilde{C}^{\mathrm{v}}_{l,m,n}(t)|\mathbf{X^{\mathrm{v}}(t)}]$.
\begin{lemma}\label{prop:beta_alpha2}
    $\omega^{\mathrm{v2}}$ and $E[\chi^{\mathrm{v}}|\mathbf{X^{\mathrm{v}}(t)}]$ are always bounded by a positive number regardless of the control policy. 
\end{lemma}
The proof for Lemma \ref{prop:beta_alpha2} can be found in \citep{varaiya2013max}, and it is omitted for simplicity.

Next, for pedestrian crosswalk movements, according to Equation \eqref{eq:ped_cross_transition}, we obtain

\begin{equation}\label{eq:ped_delta}
\begin{split}
    \delta^{\mathrm{pc}}(i_{\alpha},i_{\beta})(t) = & x^{\mathrm{pc}}_{i_{\alpha},i_{\beta}}(t+1)-x^{\mathrm{pc}}_{i_{\alpha},i_{\beta}}(t)\\
    = & -\min(C^{\mathrm{p}}_{i_{\alpha},i_{\beta}}(t)S^{\mathrm{p}}_{i_{\alpha},i_{\beta}}(t),x^{\mathrm{pc}}_{i_{\alpha},i_{\beta}}(t))+\min(C^{\mathrm{p}}_{i_{\gamma},i_{\alpha}}(t)S^{\mathrm{p}}_{i_{\gamma},i_{\alpha}}(t),x^{\mathrm{pc}}_{i_{\gamma},i_{\alpha}}(t))R^{\mathrm{p}}_{i_{\gamma},i_{\alpha},i_{\beta}}\\
    & +\sum_{j_{\eta} \in \pazocal{U}^{i_{\alpha}}}\psi(\mathbf{V}(j_{\eta}, i_{\alpha}))R^{\mathrm{p}}_{j_{\eta},i_{\alpha},i_{\beta}}
\end{split}
\end{equation}

Then,
\begin{equation}\label{eq:diff_ped}
    |\sqrt{\lambda}\mathbf{X}^{\mathrm{pc}}(t+1)|^2-|\sqrt{\lambda}\mathbf{X}^{\mathrm{pc}}(t)|^2=2\lambda\mathbf{X}^{\mathrm{pc}}(t)^T\mathbf{\delta}^{\mathrm{pc}}(t)+\lambda|\mathbf{\delta}^{\mathrm{pc}}(t)|^2=2\lambda\omega^{\mathrm{pc}}+\lambda\chi^{\mathrm{pc}}
\end{equation}
where

\begin{equation}
    \chi^{\mathrm{pc}} = |\mathbf{\delta}^{\mathrm{pc}}(t)|^2
\end{equation}
and

\begin{subequations}\label{eq:alpha_ped}
\begin{align}
    \omega^{\mathrm{pc}}=&\mathbf{X}^{\mathrm{pc}}(t)^T\mathbf{\delta}^{\mathrm{pc}}(t)\\
    =& -\sum_{i,\alpha,\beta}x^{\mathrm{pc}}_{i_{\alpha}, i_{\beta}}(t)\min\left(C^{\mathrm{p}}_{i_{\alpha},i_{\beta}}(t)S^{\mathrm{p}}_{i_{\alpha},i_{\beta}}(t),x^{\mathrm{pc}}_{i_{\alpha},i_{\beta}}(t)\right)\\
    &+ \sum_{i,\alpha,\beta}x^{\mathrm{pc}}_{i_{\alpha}, i_{\beta}}(t)\min(C^{\mathrm{p}}_{i_{\gamma},i_{\alpha}}(t)S^{\mathrm{p}}_{i_{\gamma},i_{\alpha}}(t),x^{\mathrm{pc}}_{i_{\gamma},i_{\alpha}}(t))R^{\mathrm{p}}_{i_{\gamma},i_{\alpha},i_{\beta}}\\
    &+ \sum_{i,\alpha,\beta}x^{\mathrm{pc}}_{i_{\alpha}, i_{\beta}}(t)\left(\sum_{j_{\eta} \in \mathbf{U}_{i_{\alpha}}}\psi(\mathbf{V}(j_{\eta}, i_{\alpha}))R^{\mathrm{p}}_{j_{\eta},i_{\alpha},i_{\beta}}\right)\\
    =& \omega^{\mathrm{pc1}}_{i_{\alpha}, i_{\beta}}(t)+\omega^{\mathrm{pc2}}_{i_{\alpha}, i_{\beta}}(t)+\omega^{\mathrm{pc3}}_{i_{\alpha}, i_{\beta}}(t)
\end{align}
\end{subequations}
\begin{lemma}
    $\lambda \chi^{\mathrm{pc}}$ is upper bounded by a positive constant.
\end{lemma}
\begin{proof}
    Since the random saturation flow for pedestrians is bounded, the first two terms in Equation \eqref{eq:ped_delta} are bounded. In addition, according to Equation \eqref{eq:stable_region_sw}, the number of pedestrians on sidewalks is bounded. Therefore, $\psi(\mathbf{V}(j_{\eta}, i_{\alpha}))$ in Equation \eqref{eq:ped_delta} is also bounded. Consequently, $\lambda\chi^{\mathrm{pc}}$ is upper bounded.
\end{proof}

Next, we focus on $\omega^{\mathrm{pc}}$. Changing the sequence of the subscripts for $\omega^{\mathrm{pc2}}_{i_{\alpha}, i_{\beta}}(t)$ obtains,
\begin{equation}
    \omega^{\mathrm{pc2}}_{i_{\alpha}, i_{\beta}}(t)=\sum_{i,\alpha,\beta}x^{\mathrm{pc}}_{i_{\beta}, i_{\gamma}}(t)\min(C^{\mathrm{p}}_{i_{\alpha},i_{\beta}}(t)S^{\mathrm{p}}_{i_{\alpha},i_{\beta}}(t),x^{\mathrm{pc}}_{i_{\alpha},i_{\beta}}(t))R^{\mathrm{p}}_{i_{\alpha},i_{\beta},i_{\gamma}}
\end{equation}

Therefore, according to Equation \eqref{eq:ped_weight}, we obtain

\begin{equation}\label{eq:alpha12}
\begin{split}
    &\omega^{\mathrm{pc1}}_{i_{\alpha}, i_{\beta}}(t)+\omega^{\mathrm{pc2}}_{i_{\alpha}, i_{\beta}}(t)\\
    =& -\sum_{i,\alpha,\beta}\min(C^{\mathrm{p}}_{i_{\alpha},i_{\beta}}(t)S^{\mathrm{p}}_{i_{\alpha},i_{\beta}}(t),x^{\mathrm{pc}}_{i_{\alpha},i_{\beta}}(t))\left(x^{\mathrm{pc}}_{i_{\alpha}, i_{\beta}}(t)-x^{\mathrm{pc}}_{i_{\beta}, i_{\gamma}}(t)R^{\mathrm{p}}_{i_{\alpha},i_{\beta},i_{\gamma}}\right)\\
    =& -\sum_{i,\alpha,\beta}\min(C^{\mathrm{p}}_{i_{\alpha},i_{\beta}}(t)S^{\mathrm{p}}_{i_{\alpha},i_{\beta}}(t),x^{\mathrm{pc}}_{i_{\alpha},i_{\beta}}(t))w^{\mathrm{p}}_{i_{\alpha}, i_{\beta}}(t+1)
\end{split}
\end{equation}

The term in the parenthesis in $\omega^{\mathrm{pc3}}_{i_{\alpha}, i_{\beta}}(t)$ in Equation \eqref{eq:alpha_ped} is the average pedestrian demand from upstream sidewalks to a crosswalk, which is equivalent to the total pedestrian demand for the crosswalk movement subtracted by the pedestrian demand from upstream crosswalks. Therefore,

\begin{subequations}\label{eq:alpha3}
    \begin{align}
        E[\omega^{\mathrm{pc3}}_{i_{\alpha}, i_{\beta}}(t)|\mathbf{X}^{\mathrm{pc}}(t)]=&\sum_{i,\alpha,\beta}x^{\mathrm{pc}}_{i_{\alpha},i_{\beta}}(t)\left( f^{\mathrm{pc}}_{i_{\alpha}, i_{\beta}}(t)-f^{\mathrm{pc}}_{i_{\gamma}, i_{\alpha}}(t)R^{\mathrm{p}}_{i_{\gamma}, i_{\alpha},i_{\beta}}(t) \right) \\
        & \text{(For a crosswalk movement, only one upstream crosswalk movement exists)} \nonumber\\
        =&\sum_{i,\alpha,\beta}\left[x^{\mathrm{pc}}_{i_{\alpha},i_{\beta}}(t) f^{\mathrm{pc}}_{i_{\alpha}, i_{\beta}}(t)-x^{\mathrm{pc}}_{i_{\alpha},i_{\beta}}(t)f^{\mathrm{pc}}_{i_{\gamma}, i_{\alpha}}(t)R^{\mathrm{p}}_{i_{\gamma}, i_{\alpha},i_{\beta}}(t) \right]\\
        =&\sum_{i,\alpha,\beta}\left[x^{\mathrm{pc}}_{i_{\alpha},i_{\beta}}(t) f^{\mathrm{pc}}_{i_{\alpha}, i_{\beta}}(t)-x^{\mathrm{pc}}_{i_{\beta},i_{\gamma}}(t)f^{\mathrm{pc}}_{i_{\alpha}, i_{\beta}}(t)R^{\mathrm{p}}_{i_{\alpha},i_{\beta}, i_{\gamma}}(t)\right]\\
        =&\sum_{i,\alpha,\beta}f^{\mathrm{pc}}_{i_{\alpha}, i_{\beta}}w^{\mathrm{p}}_{i_{\alpha}, i_{\beta}}(t+1)
    \end{align}
\end{subequations}

Therefore, by combining Equation \eqref{eq:alpha12} and Equation \eqref{eq:alpha3}, we have

\begin{subequations}\label{eq:alpha^p_c4}
\begin{align}
    E\left[\omega^{\mathrm{pc}}|\mathbf{X}^{\mathrm{pc}}(t)\right]=&E\left[\sum_{i,\alpha,\beta}\left[f^{\mathrm{pc}}_{i_{\alpha}, i_{\beta}}-\min(C^{\mathrm{p}}_{i_{\alpha},i_{\beta}}(t)S^{\mathrm{p}}_{i_{\alpha},i_{\beta}}(t),x^{\mathrm{pc}}_{i_{\alpha},i_{\beta}}(t))\right]w^{\mathrm{p}}_{i_{\alpha}, i_{\beta}}(t+1)|\mathbf{X}^{\mathrm{pc}}(t)\right]\\
    =&\sum_{i,\alpha,\beta}\left[f^{\mathrm{pc}}_{i_{\alpha}, i_{\beta}}-c^{\mathrm{p}}_{i_{\alpha},i_{\beta}}(t)S^{\mathrm{p}}_{i_{\alpha},i_{\beta}}(t)\right]w^{\mathrm{p}}_{i_{\alpha}, i_{\beta}}(t+1)\\
    &+\left[\sum_{i,\alpha,\beta}\left[c^{\mathrm{p}}_{i_{\alpha},i_{\beta}}(t)S^{\mathrm{p}}_{i_{\alpha},i_{\beta}}(t)-\min(C^{\mathrm{p}}_{i_{\alpha},i_{\beta}}(t)S^{\mathrm{p}}_{i_{\alpha},i_{\beta}}(t),x^{\mathrm{pc}}_{i_{\alpha},i_{\beta}}(t))\right]w^{\mathrm{p}}_{i_{\alpha}, i_{\beta}}(t+1)|\mathbf{X}^{\mathrm{pc}}(t)\right]\\
    =& \omega^{\mathrm{pc4}}+\omega^{\mathrm{pc5}}
\end{align}
\end{subequations}

Again, using the same manner as in Lemma \ref{prop:beta_alpha2} can easily prove that $\omega^{\mathrm{pc5}}$ is upper bounded by a positive number. 

Combining Equations \eqref{eq:diff_veh}, \eqref{eq:omega_v}, \eqref{eq:diff_ped} and \eqref{eq:alpha^p_c4} obtains

\begin{equation}\label{eq:transform}
\begin{split}
    &E\{|\mathbf{X}^\mathrm{v}(t+1)|^2-|\mathbf{X}^\mathrm{v}(t)|^2+|\sqrt{\lambda}\mathbf{X}^{\mathrm{pc}}(t+1)|^2-|\sqrt{\lambda}\mathbf{X}^{\mathrm{pc}}(t)|^2|\mathbf{X}^\mathrm{v}(t),\mathbf{X}^{\mathrm{pc}}(t)\}\\
    =& 2\omega^{\mathrm{v1}}+2\omega^{\mathrm{v2}}+E[\chi^{\mathrm{v}}|\mathbf{X^{\mathrm{v}}(t)}]+2\lambda\omega^{\mathrm{pc4}}+2\lambda\omega^{\mathrm{pc5}}+\lambda E[\chi^{\mathrm{pc}}|\mathbf{X^{\mathrm{pc}}(t)}]\\
    =&2\left(\omega^{\mathrm{v1}}+\lambda\omega^{\mathrm{pc4}}\right)+\left(2\omega^{\mathrm{v2}}+E[\chi^{\mathrm{v}}|\mathbf{X^{\mathrm{v}}(t)}]+2\lambda\omega^{\mathrm{pc5}}+\lambda E[\chi^{\mathrm{pc}}|\mathbf{X^{\mathrm{pc}}(t)}]\right)
\end{split}
\end{equation}

We already proved that the term in the second parenthesis in Equation \eqref{eq:transform} is bounded, so we next focus on $\omega^{\mathrm{v1}}+\lambda\omega^{\mathrm{pc4}}$. Let $\mathbf{S}^{*}(t)$ be the optimal control array from PQ-MP. Since $\mathbf{S}^{*}(t)$ maximizes $\sum_{(h,i,j)\in \pazocal{M}^{vi}}w^{\mathrm{v}}_{h,i,j}\tilde{C}^{\mathrm{v}}_{h,i,j}S^{ij}_{h,i,j}+\lambda\sum_{(i_{\alpha},i_{\beta})\in \pazocal{M}^{pi}}w^{\mathrm{p}}_{i_{\alpha},i_{\beta}}C^{\mathrm{p}}_{i_{\alpha},i_{\beta}}S^{ij}_{i_{\alpha},i_{\beta}}$ which is a linear function of $\mathbf{S}(t)$, based on the fundamental theorem of linear programming, $\mathbf{S}^{*}(t)$ also maximizes this function over the entire convex hull. By combining Eqs. \eqref{eq:alpha1_v} and \eqref{eq:alpha^p_c4}, for any control array $\mathbf{S}\in co(\pazocal{S})$, we have

\begin{equation}\label{eq:alphas}
\begin{split}
    &\omega^{\mathrm{v1}}+\lambda\omega^{\mathrm{pc4}}\\
    =&\sum_{l,m,n}\left[f^{\mathrm{v}}_{l,m,n}-\tilde{c}^{\mathrm{v}}_{l,m,n}(t)S^{\mathrm{v}*}_{l,m,n}(t)\right]w^{\mathrm{v}}_{l,m,n}(t+1)+\sum_{i,\alpha,\beta}\left[f^{\mathrm{pc}}_{i_{\alpha}, i_{\beta}}-c^{\mathrm{p}}_{i_{\alpha},i_{\beta}}S^{\mathrm{p}*}_{i_{\alpha},i_{\beta}}(t)\right]\lambda w^{\mathrm{p}}_{i_{\alpha}, i_{\beta}}(t+1)\\
    \\
    \le&\sum_{l,m,n}\left[f^{\mathrm{v}}_{l,m,n}-\tilde{c}^{\mathrm{v}}_{l,m,n}(t)S^{\mathrm{v}}_{l,m,n}(t)\right]w^{\mathrm{v}}_{l,m,n}(t+1)+\sum_{i,\alpha,\beta}\left[f^{\mathrm{pc}}_{i_{\alpha}, i_{\beta}}-c^{\mathrm{p}}_{i_{\alpha},i_{\beta}}S^{\mathrm{p}}_{i_{\alpha},i_{\beta}}(t)\right]\lambda w^{\mathrm{p}}_{i_{\alpha}, i_{\beta}}(t+1)
\end{split}
\end{equation}

Since $(\mathbf{d}^{\mathrm{v}},\mathbf{q}_{\mathrm{in}})$ in $(\pazocal{D}^0, \pazocal{Q}_{\mathrm{in}}^0)$, there exist $\epsilon_1>0, \epsilon_2>0$ and a control array $\mathbf{S}\in co(\pazocal{S})$ such that 

\begin{subequations}\label{eq:ineq1}
\begin{align}
\tilde{c}^{\mathrm{v}}_{l,m,n}(t)S^{\mathrm{v}}_{l,m,n} \ge f^{\mathrm{v}}_{l,m,n}+\epsilon_1 & \quad \text{if $w^{\mathrm{v}}_{l,m,n}(t+1)\ge0$}\\
S^{\mathrm{v}}_{l,m,n} = 0 & \quad \text{if $w^{\mathrm{v}}_{l,m,n}(t+1)<0$}\\
c^{\mathrm{p}}_{i_{\alpha},i_{\beta}}S^{\mathrm{p}}_{i_{\alpha},i_{\beta}}(t) \ge f^{\mathrm{pc}}_{i_{\alpha}, i_{\beta}}+\epsilon_2 & \quad \text{if $w^{\mathrm{p}}_{i_{\alpha},i_{\beta}}(t+1)\ge0$}\\
S^{\mathrm{p}}_{i_{\alpha},i_{\beta}}(t) = 0 & \quad \text{if $w^{\mathrm{p}}_{i_{\alpha},i_{\beta}}(t+1)<0$}
\end{align}
\end{subequations}

We assume that the distribution of the adjusted saturation flow, $\tilde{C}^{\mathrm{v}}_{l,m,n}(t)$, is fixed, so the corresponding mean value $\tilde{c}^{\mathrm{v}}_{l,m,n}(t)$ is also fixed. Substituting $\mathbf{S}$ into Equation \eqref{eq:alphas} obtains

\begin{subequations}
    \begin{align}
        &\omega^{\mathrm{v}}_1+\lambda\omega^{\mathrm{pc4}}\\
        \le&\sum_{l,m,n}\left[f^{\mathrm{v}}_{l,m,n}-\tilde{c}^{\mathrm{v}}_{l,m,n}(X^{\mathrm{v}}(t), \mathbf{S}(t))S^{\mathrm{v}}_{l,m,n}(t)\right]w^{\mathrm{v}}_{l,m,n}(t+1)+\sum_{i,\alpha,\beta}\left[f^{\mathrm{pc}}_{i_{\alpha}, i_{\beta}}-c^{\mathrm{p}}_{i_{\alpha},i_{\beta}}S^{\mathrm{p}}_{i_{\alpha},i_{\beta}}(t)\right]\lambda w^{\mathrm{p}}_{i_{\alpha}, i_{\beta}}(t+1)\\
        \le&\sum_{\{(l,m,n):w^{\mathrm{v}}_{l,m,n}(t+1)\ge0\}}-\epsilon_1w^{\mathrm{v}}_{l,m,n}(t+1)+\sum_{\{(l,m,n):w^{\mathrm{v}}_{l,m,n}(t+1)<0\}}f^{\mathrm{v}}_{l,m,n}w^{\mathrm{v}}_{l,m,n}(t+1)\\
        &+\sum_{\{(i,\alpha,\beta):w^{\mathrm{p}}_{i_{\alpha},i_{\beta}}(t+1)\ge0\}}-\lambda\epsilon_2w^{\mathrm{p}}_{i_{\alpha},i_{\beta}}(t+1)+\sum_{\{(i,\alpha,\beta):w^{\mathrm{p}}_{i_{\alpha},i_{\beta}}(t+1)<0\}}f^{\mathrm{pc}}_{i_{\alpha}, i_{\beta}}\lambda w^{\mathrm{p}}_{i_{\alpha},i_{\beta}}(t+1)\\
        \le& -\epsilon_1\sum_{l,m,n}|\mathbf{w^{\mathrm{v}}(t)}|-\lambda\epsilon_2 \sum_{i,\alpha,\beta}|\mathbf{w^{\mathrm{p}}(t)}|
    \end{align}
\end{subequations}

Since $\mathbf{w}^{\mathrm{v}}(t)$ and $\mathbf{w}^{\mathrm{p}}(t)$ are linear functions of $\mathbf{X}^{\mathrm{v}}(t)$ and $\mathbf{X}^{\mathrm{pc}}(t)$, respectively, it is easy to prove that there exist $\eta_1>0$ and $\eta_2>0$ such that

\begin{equation}
    |\mathbf{w}^{\mathrm{v}}(t)|\ge \eta_1 |\mathbf{X}^{\mathrm{v}}(t)|\quad \text{and}\quad
    |\mathbf{w}^{\mathrm{p}}(t)|\ge \eta_2 |\mathbf{x}^{\mathrm{pc}}(t)|
\end{equation}

Therefore, there exist $\xi_1>0$ and $\xi_2>0$ such that

\begin{equation}
    \alpha^{\mathrm{v}}_1+\alpha^{\mathrm{pc4}}\le -\xi_1|\mathbf{X^{\mathrm{v}}(t)}|-\xi_2|\mathbf{X^{\mathrm{pc}}(t)}|
\end{equation}

Thus, Equation \eqref{eq:key} holds under the control of PQ-MP.

\begin{lemma}\label{lemma:ms}
    If there exist $\xi_1>0$, $\xi_2>0$ such that the Lyapunov functions $|\mathbf{X}^{\mathrm{v}}(t)|^2$ and $|\mathbf{X}^{\mathrm{pc}}(t)|^2$ satisfy Equation \eqref{eq:key} under a control sequence, then the control sequence stabilizes the queue process.
\end{lemma}
\begin{proof}
    Taking expectation for both sides of Eq. \eqref{eq:key} obtains
    \begin{equation}\label{eq:lemma1}
        |\mathbf{X}^{\mathrm{v}}(t+1)|^2-|\mathbf{X}^{\mathrm{v}}(t)|^2+\lambda|\mathbf{X}^{\mathrm{pc}}(t+1)|^2-\lambda |\mathbf{X}^{\mathrm{pc}}(t)|^2\le k-\xi_1E[|\mathbf{X}^{\mathrm{v}}(t)|]-\xi_2E[|\mathbf{X}^{\mathrm{pc}}(t)|]
    \end{equation}
    Summing over $t=1, 2, 3, ..., T$ for both sides of Eq. \eqref{eq:lemma1} gives
    \begin{equation}\label{eq:lemma1}
        |\mathbf{X}^{\mathrm{v}}(T+1)|^2-|\mathbf{X}^{\mathrm{v}}(1)|^2+\lambda|\mathbf{X}^{\mathrm{pc}}(T+1)|^2-\lambda|\mathbf{X}^{\mathrm{pc}}(1)|^2\le k-\min(\xi_1, \xi_2)\sum_{t=1}^T\left(E[|\mathbf{X}^{\mathrm{v}}(t)|]+E[|\mathbf{X}^{\mathrm{pc}}(t)|]\right)
    \end{equation}
    Therefore,
    \begin{equation}\label{eq:lemma1}
        \frac{1}{T}\sum_{t=1}^T\left(E[|\mathbf{X}^{\mathrm{v}}(t)|]+E[|\mathbf{X}^{\mathrm{pc}}(t)|]\right)\le \frac{1}{\min(\xi_1, \xi_2)T}\left(k+|\mathbf{X}^{\mathrm{v}}(1)|^2+\lambda|\mathbf{X}^{\mathrm{pc}}(1)|^2\right)=M, \; \forall T
    \end{equation}
\end{proof}

Theorem \ref{theorem:ms} is proved.
\end{proof}
The maximum stability property indicates that PQ-MP can accommodate the combined vehicular and pedestrian demand if it can be served by any signal control strategy. Thus, the proposed PQ-MP maximizes network throughput.

\section{Numerical simulation}\label{sec:simulation}
This section demonstrates the control performance of PQ-MP under various conditions in microscopic traffic simulations using the SUMO simulation software \citep{SUMO2018}. The original queue-based MP that does not consider pedestrians, referred to as Q-MP in the following,  and the rule-based MP proposed in \citep{xu2023ped} are used as baseline models.

\subsection{Network setup}
A 5$\times$5 uniform grid network of two-way streets, as shown in Figure \ref{fig:net}, is used in the simulation. All links have three dedicated lanes, one for each of left-turn, through and right-turn vehicle movement. Each link has a sidewalk on the right side, and each intersection has a crossing on each of the four sides. The length for all links is 300 m. The speed limit for vehicles on all links is 15 m/s, and the average pedestrian travel speed is 1.3 m/s. 

\begin{figure}[htb!]
	\includegraphics[width=4.5in]{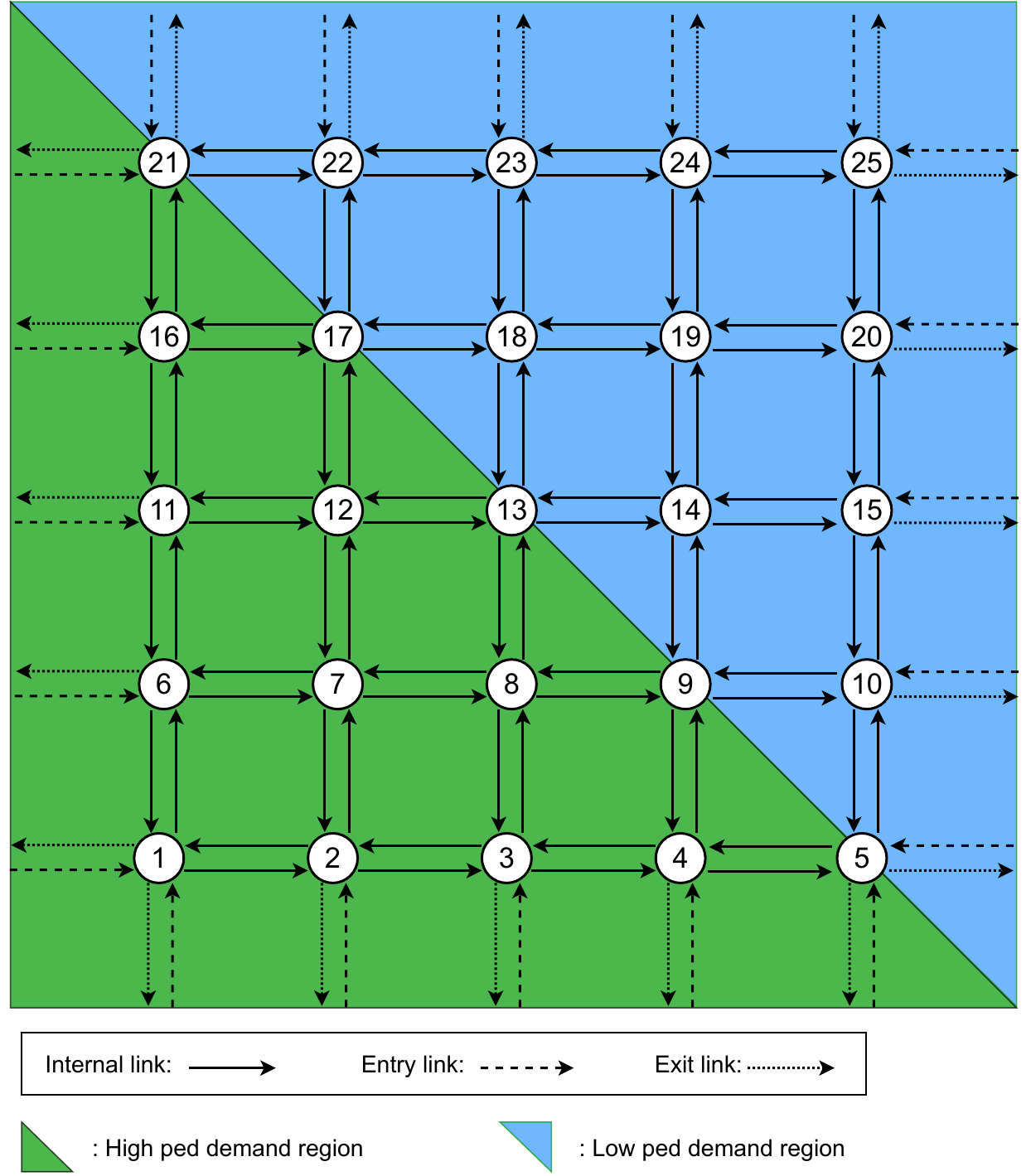}
	\centering
	\caption{Network setup.}
	\label{fig:net}
\end{figure}

All intersections have an identical set of admissible signal phases,  shown in Figure \ref{fig:signal}. There are 11 admissible phases in total. The first four rows in Figure \ref{fig:signal} correspond to the four phases for vehicle movements: northbound-southbound (NS) through (T) and right-turn (R) movement, NS left (L) movement, eastbound-westbound (EW) T and R movement and EW L movement. Pedestrians can only be served during the phases for T-R vehicular movements or during an exclusive phase for pedestrian-only movement, shown as the bottom phase in Figure \ref{fig:signal}. In the former situation, pedestrians from either one side or both sides can be served, and the right-turn vehicles need to yield to the conflicting pedestrian crosswalk movements. The time step for signal timing update is equal to the free-flow travel time of vehicles traversing a link, which is 20 s. This value is chosen due to two reasons: 1. it matches with the step size defined in the store-and-forward vehicle traffic dynamics, shown by Equation \eqref{eq:veh_transition}; 2. according to the Traffic Signal Timing Manual \citep{koonce2008traffic}, the minimum pedestrian green time should be equal to the time required by a pedestrian to cross the intersection plus 4 seconds. Under the simulation settings, the minimum pedestrian green time is equal to 18.76 s, which is close to the time step. The yellow time and all-red time are set to 3 s and 1 s, respectively.

\begin{figure}[htb!]
	\includegraphics[width=4.5in]{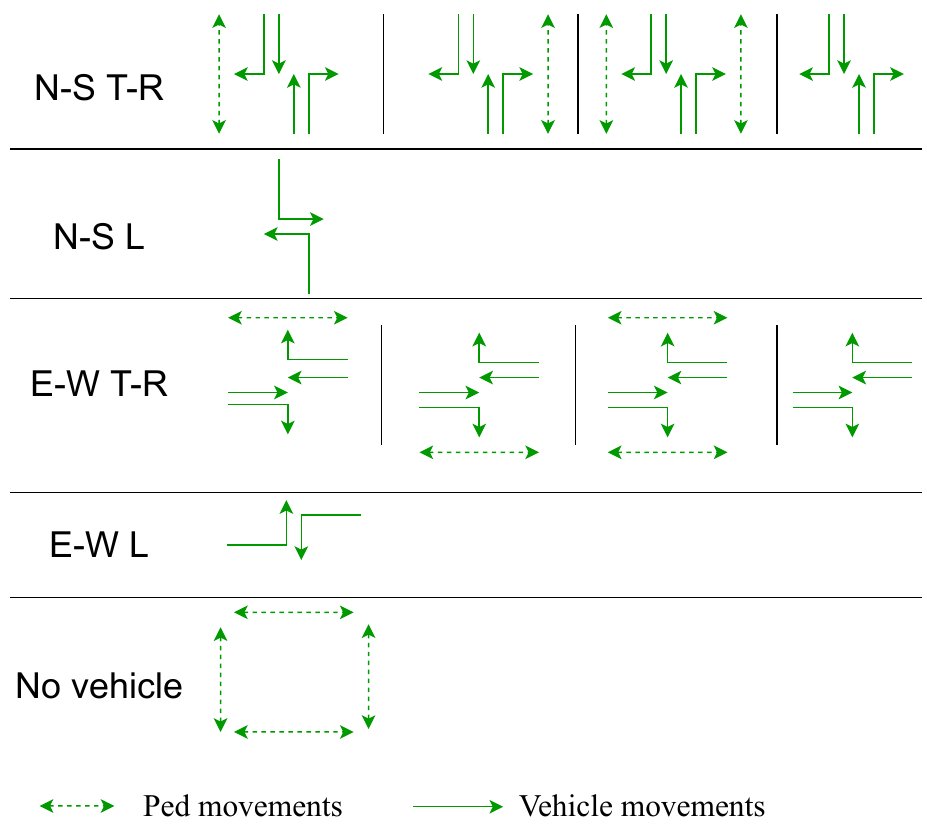}
	\centering
	\caption{Signal configuration.}
	\label{fig:signal}
\end{figure}

Vehicles enter and exit the network through the entry links and exit links, respectively, as shown in Figure \ref{fig:net}. For simplicity, identical vehicle demand value is used for all entry links, and four vehicle demand levels, i.e., \{400, 500, 600, 700\} veh/h/entry link, are tested. Each sidewalk serves as a pedestrian centroid to generate or attract trips. Every pedestrian (origin) centroid has a certain probability to generate a pedestrian trip to each of the rest (destination) centroid during the simulation. Different pedestrian demand levels are modeled. In Figure \ref{fig:net}, the pedestrian generation probability is 0.6 if both the origin and destination centroids are in the green region; otherwise, the probability between an OD pair is 0.3. 

The simulation time is 2 hours, and the second hour serves as a cool down period with zero demand. To consider the randomness, each simulation was run for 10 different starting seeds.

\subsection{Implementation of signal control algorithms}
We assume that the number of pedestrians waiting to cross the intersection is available for PQ-MP. The following values for $\lambda$: \{0.0005, 0.001, 0.002, 0.004, 0.006, 0.008, 0.01, 0.05, 0.1\}, in Equation \eqref{eq:pressure} are tested for the proposed PQ-MP.

For the rule-based MP, it is assumed that the waiting times of pedestrians in the pedestrian nodes are always available. A threshold value for the waiting time of the first arrival of crosswalk movements, denoted by $\tau$, is applied. At the instant of each signal update (every 20 s), if the waiting times from all crosswalk movements are lower than $\tau$, one of the four phases that include only vehicle movements shown in Figure \ref{fig:signal} is activated based on the Q-MP. Only vehicles are considered in this pressure calculation, and no pedestrian crosswalk movements are allowed. On the contrary, if the waiting time of certain crosswalk movements exceeds $\tau$, the phase serving the corresponding movements will be activated for the next phase. For example, if only the waiting time on the eastern side crosswalk movement exceeds the threshold, the second phase in the first row in Figure \ref{fig:signal} will be activated for the next step. For another example, if at least one crosswalk movement from each of the N-S and E-W directions incurs a waiting time longer than $\tau$, the no-vehicle phase, which is the bottom phase in Figure \ref{fig:signal}, will be activated. It is easy to imagine that a larger $\tau$ leads to a higher operational efficiency for vehicles, but generates higher delays for pedestrians.

For the Q-MP, the signal phase is determined by the pressure only taking vehicles into consideration. However, the associated pedestrian movements will always be activated as well, and the right-turn vehicles need to yield to pedestrians. For example, if the N-S T-R phase has the highest vehicle pressure, the third phase in the first row in Figure \ref{fig:signal} will be activated.

\subsection{Results}
\subsubsection{Stable regions}
As mentioned before, maximum stability is a desirable property of MP algorithms. Therefore, we first examined the stable region of each algorithm by monitoring the number of vehicles and pedestrians over time during the simulation. If a demand is stabilized under the control, the number of vehicles/pedestrians is expected to be stable after the network reaches equilibrium. Otherwise, the number of vehicles/pedestrians is expected to be continuously increasing until the demand starts decreasing. Figures \ref{fig:stable_region_qmp}--\ref{fig:stable_region_proposed} show the results under the control of the Q-MP, the rule-based MP and the PQ-MP, respectively. 

\begin{figure}[!htb]
	\includegraphics[width=4.5in]{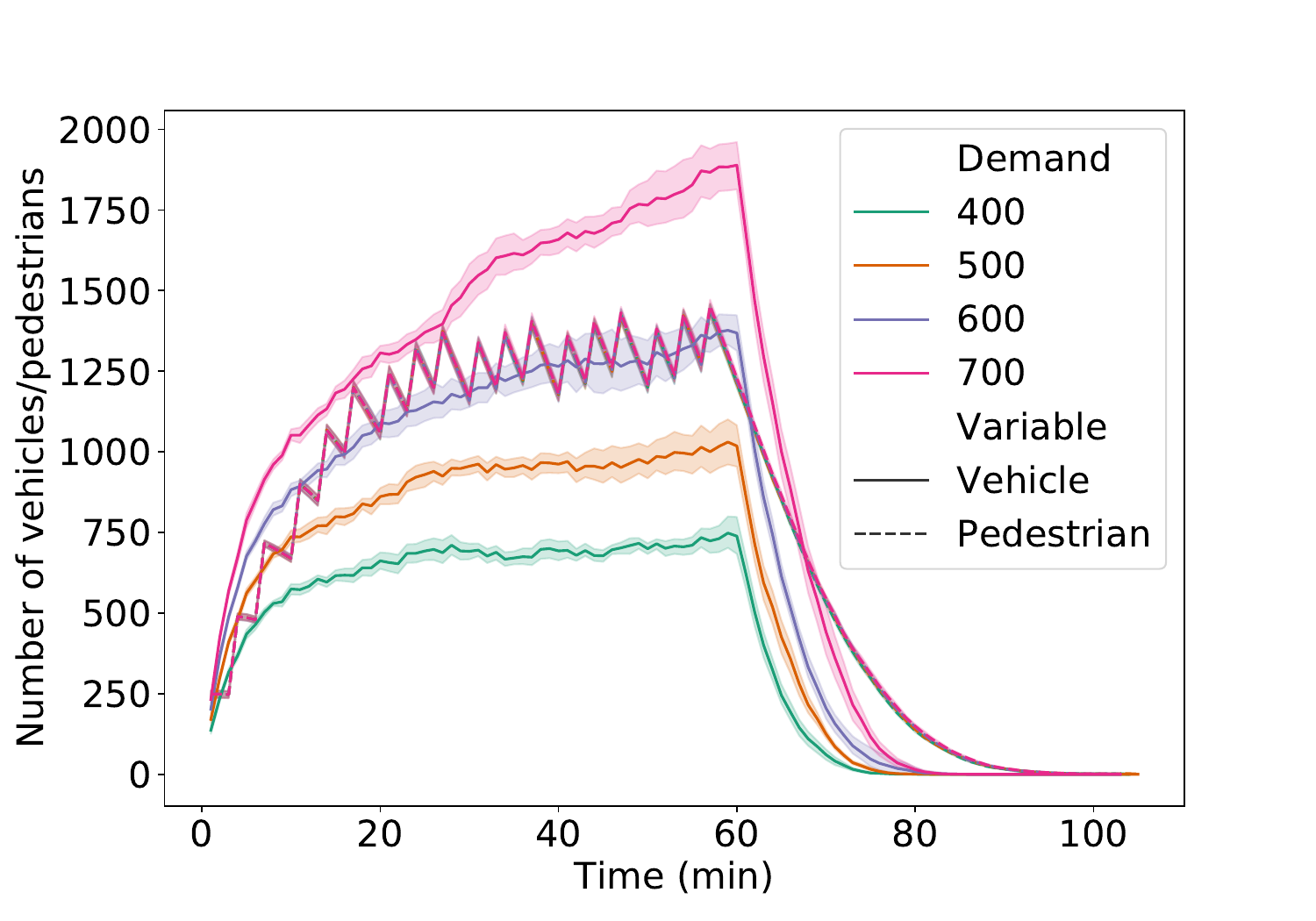}
	\centering
	\caption{Evolution of number of vehicles and pedestrians under Q-MP.}
	\label{fig:stable_region_qmp}
\end{figure} 

\begin{figure}[!ht]
    \centering
    \begin{subfigure}{0.48\textwidth}
        \centering
        \includegraphics[width=\textwidth]{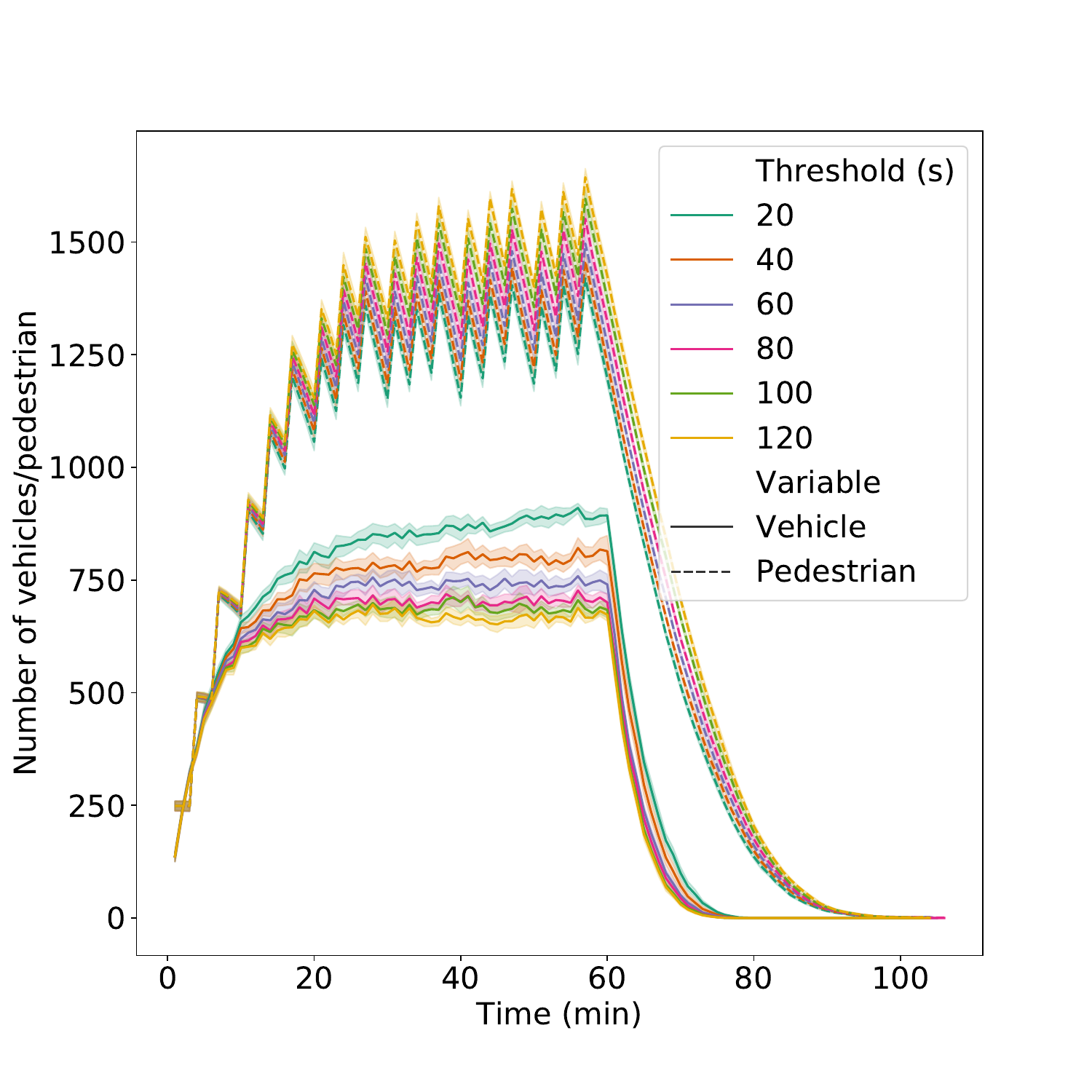}
        \caption{Vehicle demand: 400 veh/h/entry link}
    \end{subfigure}
    \begin{subfigure}{0.48\textwidth}
        \centering
        \includegraphics[width=\textwidth]{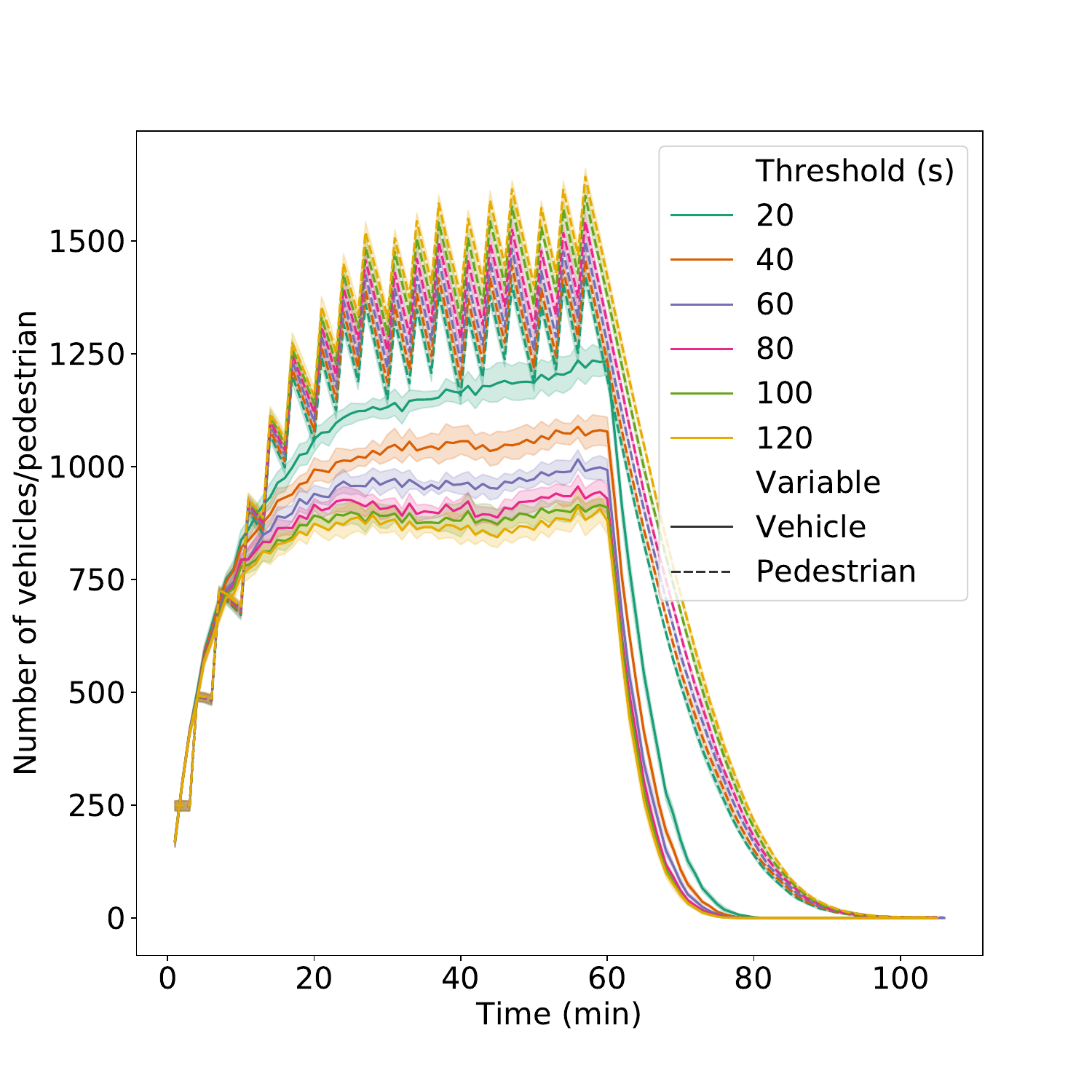}
        \caption{Vehicle demand: 500 veh/h/entry link}
    \end{subfigure}

    \begin{subfigure}{0.48\textwidth}
        \centering
        \includegraphics[width=\textwidth]{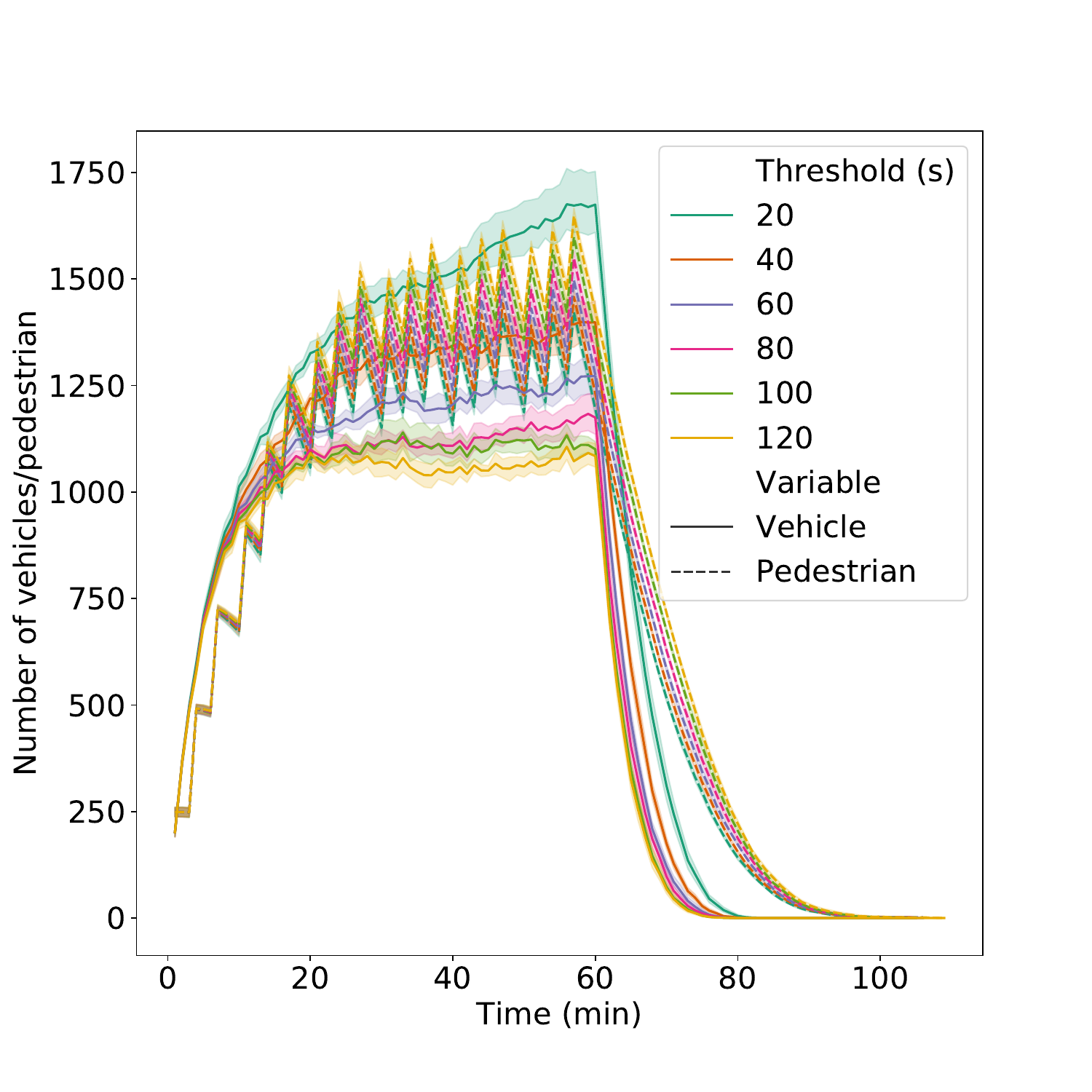}
        \caption{Vehicle demand: 600 veh/h/entry link}
    \end{subfigure}
    \begin{subfigure}{0.48\textwidth}
        \centering
        \includegraphics[width=\textwidth]{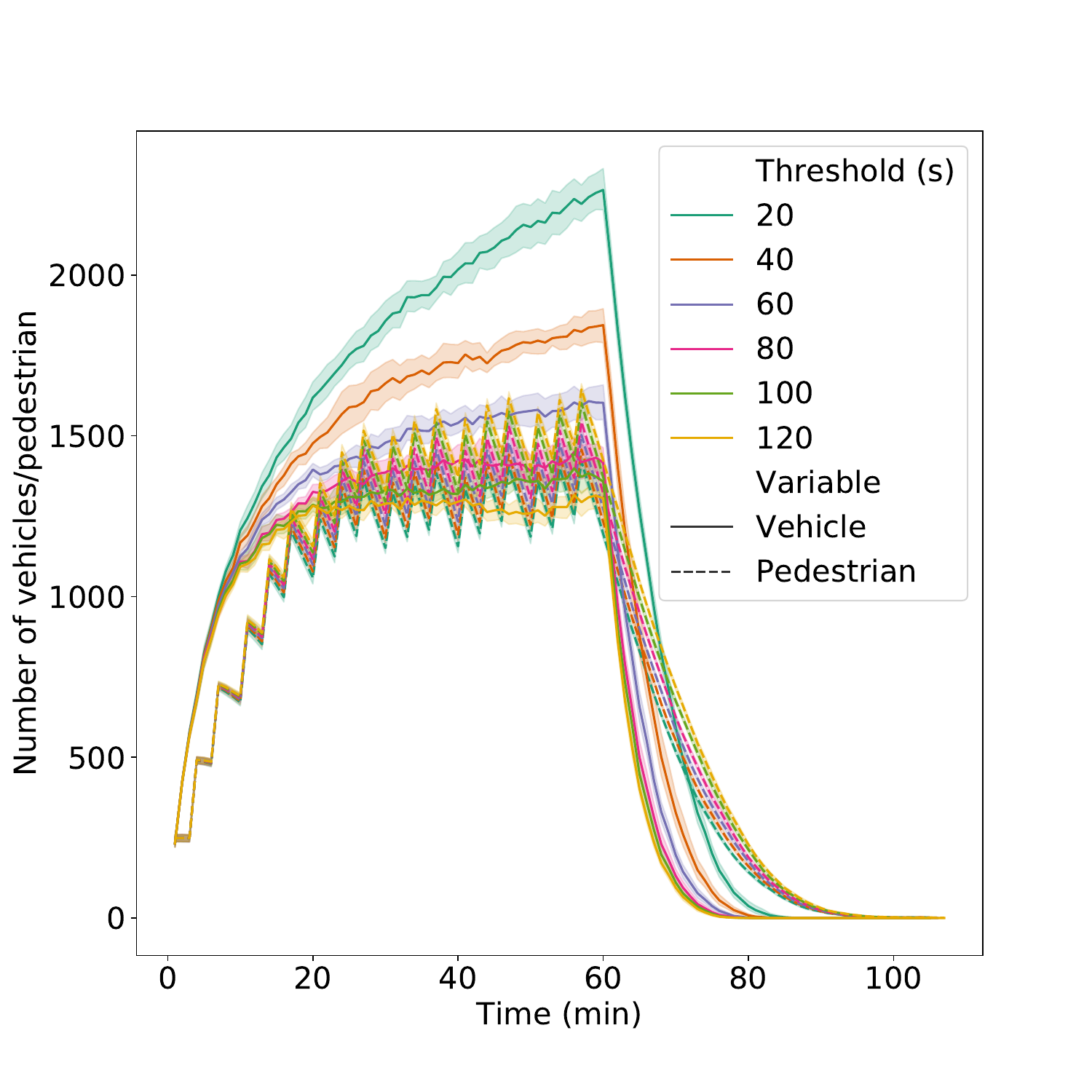}
        \caption{Vehicle demand: 700 veh/h/entry link}
    \end{subfigure}
    \caption{Evolution of number of vehicles and pedestrians under rule-based MP.}
    \label{fig:stable_region_rule}
\end{figure}

\begin{figure}[!ht]
    \centering
    \begin{subfigure}{0.48\textwidth}
        \centering
        \includegraphics[width=\textwidth]{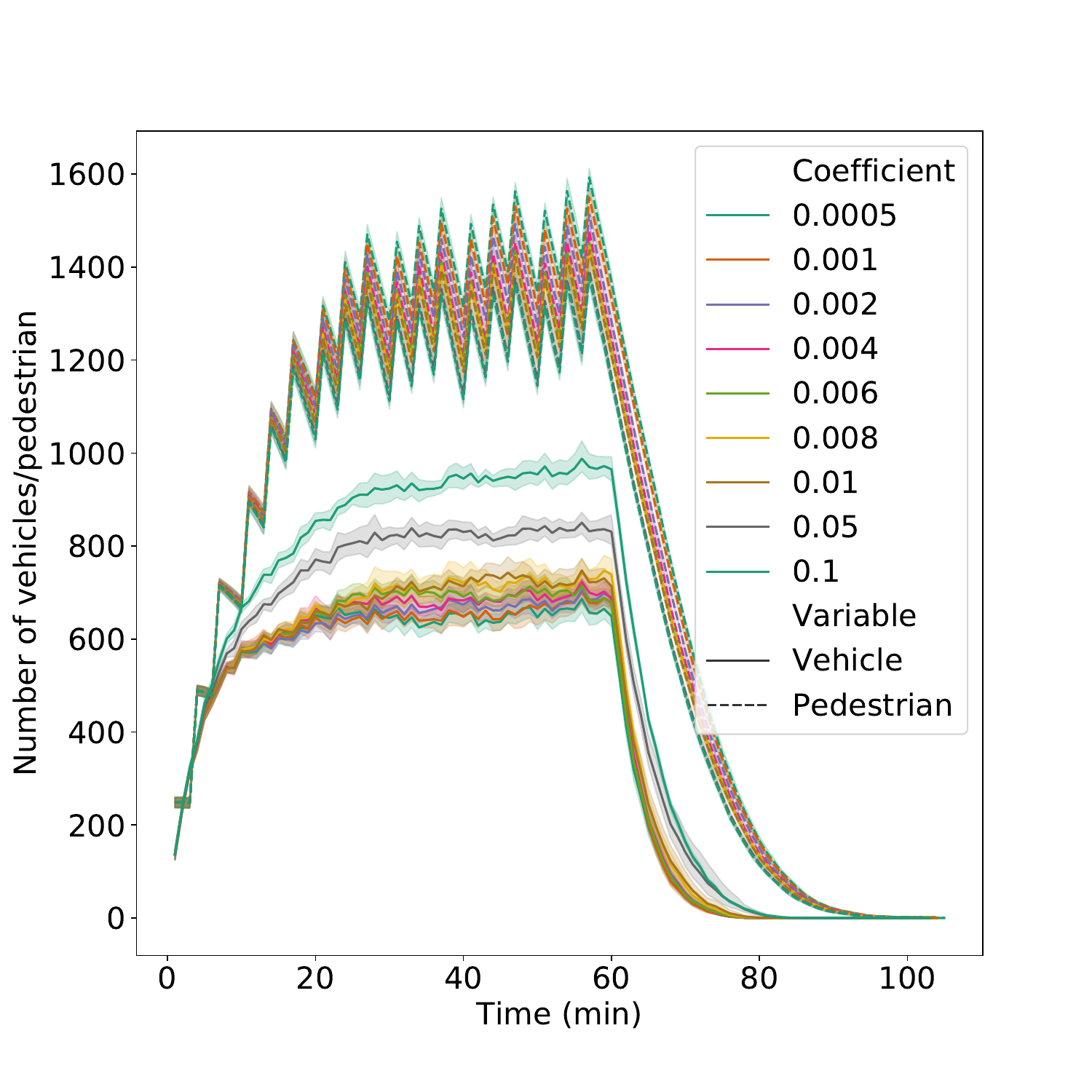}
        \caption{Vehicle demand: 400 veh/h/entry link}
    \end{subfigure}
    \begin{subfigure}{0.48\textwidth}
        \centering
        \includegraphics[width=\textwidth]{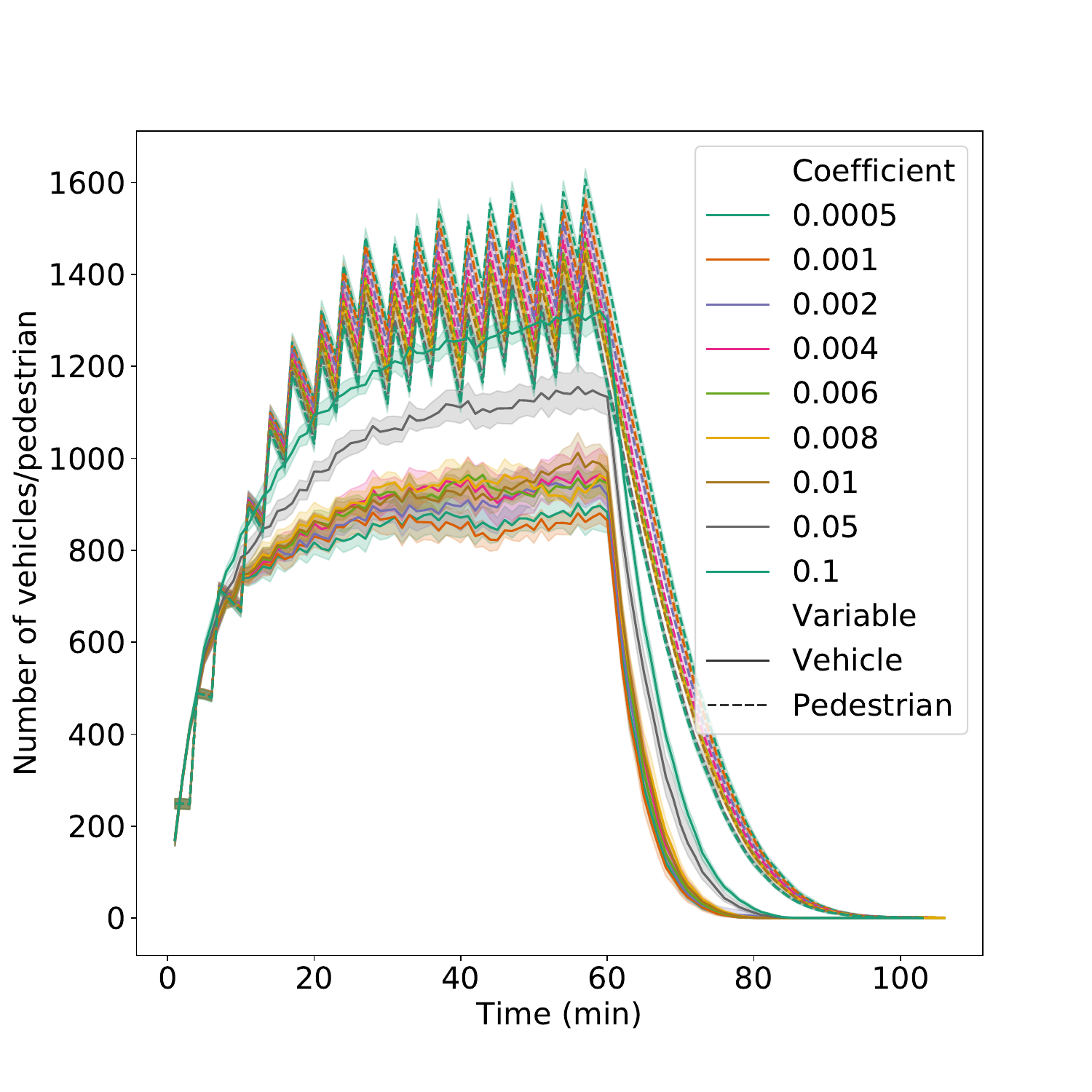}
        \caption{Vehicle demand: 500 veh/h/entry link}
    \end{subfigure}

    \begin{subfigure}{0.48\textwidth}
        \centering
        \includegraphics[width=\textwidth]{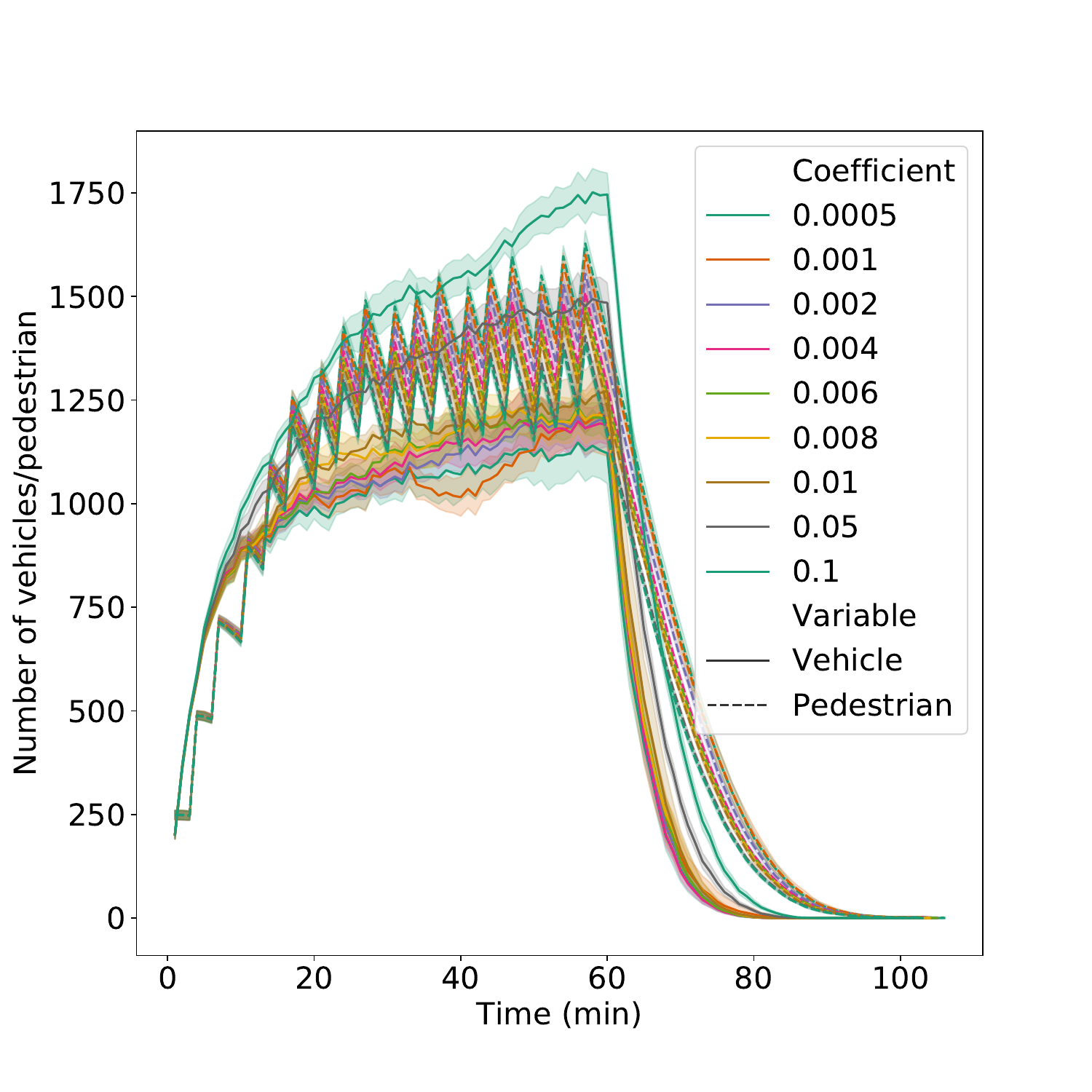}
        \caption{Vehicle demand: 600 veh/h/entry link}
    \end{subfigure}
    \begin{subfigure}{0.48\textwidth}
        \centering
        \includegraphics[width=\textwidth]{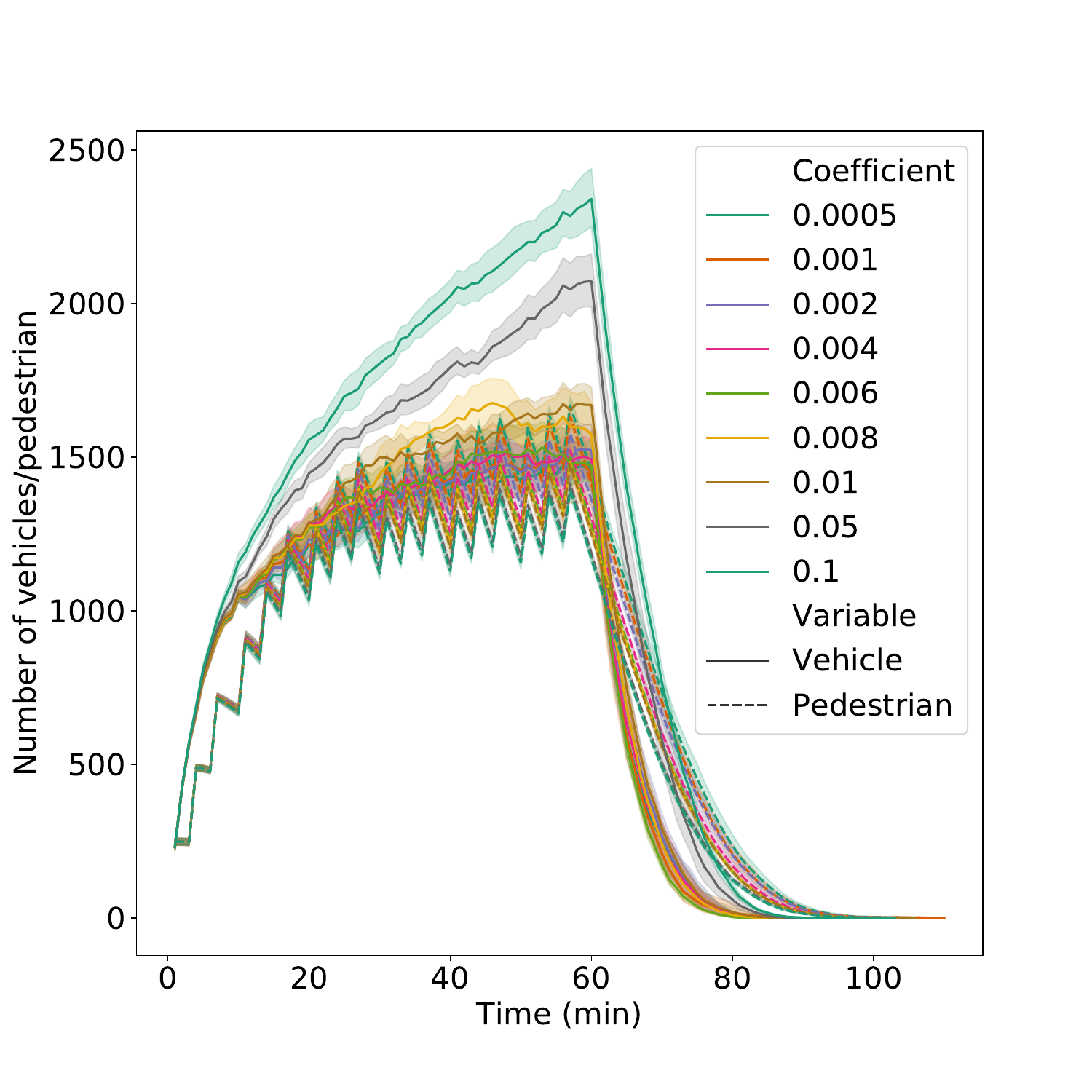}
        \caption{Vehicle demand: 700 veh/h/entry link}
    \end{subfigure}
    \caption{Evolution of number of vehicles and pedestrians under PQ-MP.}
    \label{fig:stable_region_proposed}
\end{figure}

Figure \ref{fig:stable_region_qmp} shows that  the variation of pedestrian evolution across different vehicle demand levels is insignificant compared to the variation in vehicle evolution. In addition to the fixed pedestrian demand, another reason is that under this control policy, pedestrians are always served during the T-R phases with higher priorities than the right-turn vehicles. Therefore, as long as the turning ratios do not change, the average service rate for pedestrians is expected to be relatively constant as well. On the other hand, since the right-turn vehicles are required to yield to pedestrians, under the fixed turning ratios, an increase in the vehicle demand level leads  increases in delays for the right-turn vehicles. The evolution of the number of vehicles indicates that Q-MP can stabilize vehicle demands under 500 veh/h/entry link. Once the vehicle demand exceeds 600 veh/h/entry link, the number of vehicles continues to increase until the vehicle demand drops to zero at $t=60$ min. Compared to Q-MP, both the proposed PQ-MP and the rule-based MP can still stabilize those vehicle demands with certain values for $\lambda$ and $\tau$, respectively, as shown in Figures \ref{fig:stable_region_rule}--\ref{fig:stable_region_proposed}. This finding unveils the necessity of the development of MP algorithms considering pedestrians for the network stability.

For the rule-based MP algorithm, the threshold for the waiting time determines the frequency that pedestrians receive green times. A lower threshold of pedestrian waiting time allows pedestrians to be served more frequently and thus, increases the number of vehicles in the network. The results in Figure \ref{fig:stable_region_rule} show that the rule-based MP algorithm can achieve a larger stable region for vehicles than the Q-MP if the threshold is large enough. For example, the rule-based model can even stabilize the vehicle demand of 700 veh/h/entry link if the threshold is larger than 60 s.

For the proposed PQ-MP, as $\lambda$ increases, the pedestrian-related term in Equation \eqref{eq:pressure} makes higher contribution to the pressure calculation and the influence of the pedestrian states on the control outputs becomes greater. As a result, the number of vehicles in the network increases as $\lambda$ rises. For the lowest vehicle demand scenario, all coefficients can stabilize the vehicle demand level, as shown by Figure \ref{fig:stable_region_proposed}. As the vehicle demand level increases, the number of coefficients that can stabilize the network decreases. For example, when the entry demand is 600 veh/h/entry link, only the coefficients less than 0.002 can stabilize the network. When the vehicle demand reaches 700 veh/h/entry link, the network cannot be stabilized by any of the examined coefficients. Note although the rule-based model can stabilize this vehicle demand when the threshold value for the pedestrian waiting time is larger than 60 s, this superiority over PQ-MP is not surprising. Theoretically, under the control of the rule-based MP, the stable region for vehicles can be maximized when the threshold is infinity. However, this advantage is achieved with the sacrifice in the operational efficiency of pedestrians. This conflicts with the goal of the proposed model, which is to ensure the overall operational efficiency for both vehicles and pedestrians.

\subsubsection{Travel delay}
In addition to the stable region, travel delay---the most commonly used metric to evaluate the control efficiency in traffic operations---obtained from all models are examined in this section. Figure \ref{fig:delay_rule} and Figure \ref{fig:delay_propose} show the average delay and total delay for both vehicles and pedestrians under the control of the rule-based model and the PQ-MP, respectively.

\begin{figure}[!ht]
    \centering
    \begin{subfigure}{0.8\textwidth}
        \centering
        \includegraphics[width=\textwidth]{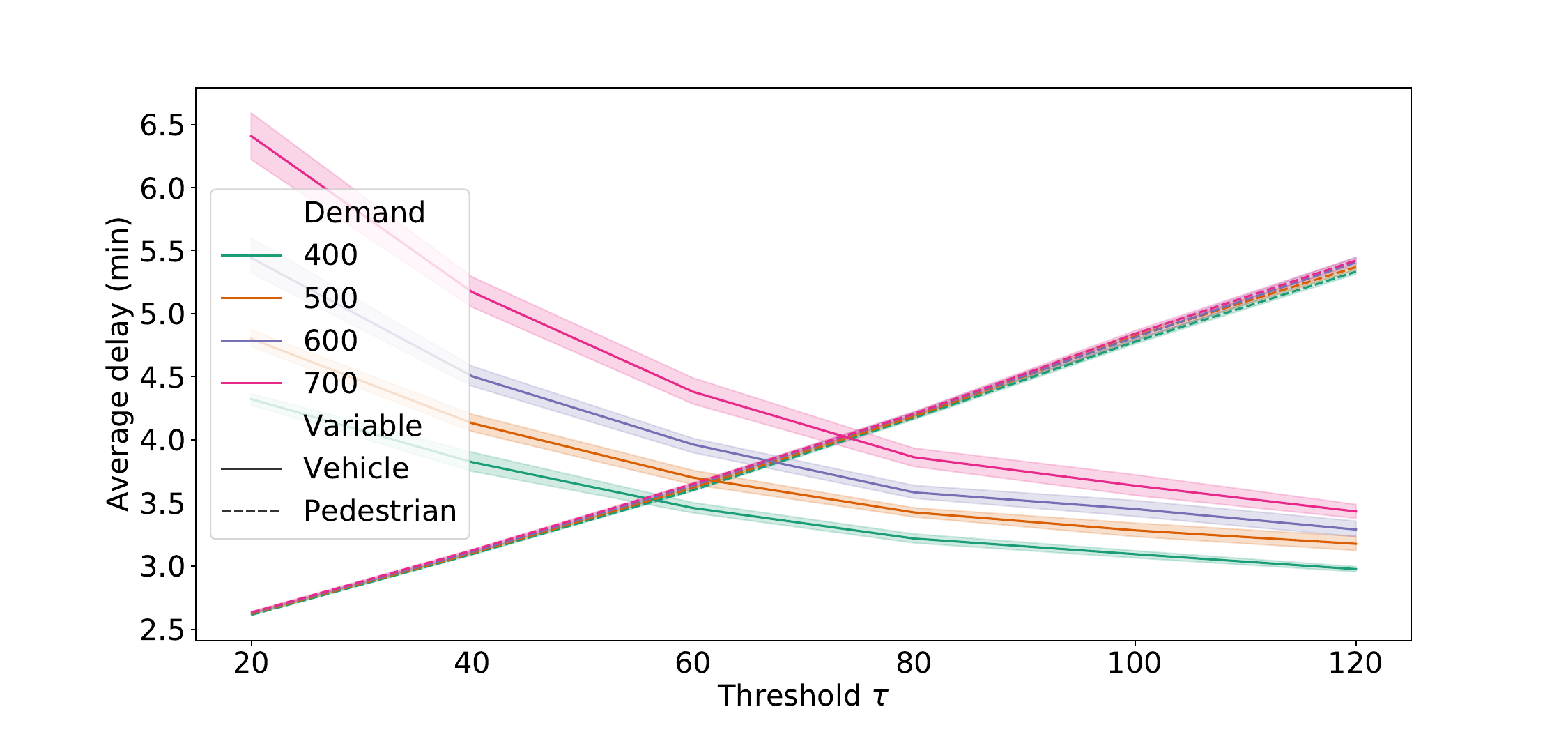}
        \caption{Average delay (min)}
    \end{subfigure}
    
    \begin{subfigure}{0.8\textwidth}
        \centering
        \includegraphics[width=\textwidth]{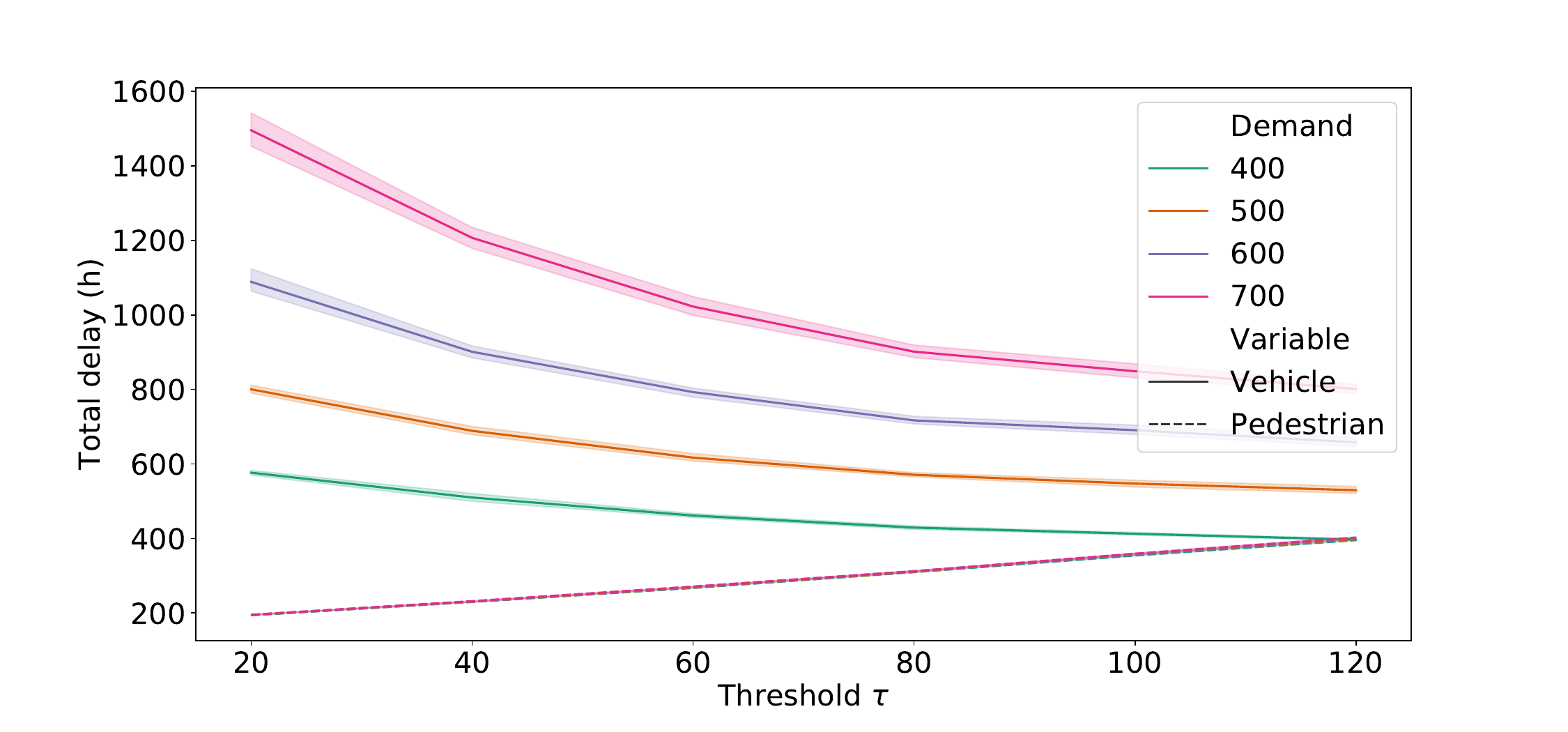}
        \caption{Total delay (h)}
    \end{subfigure}
    \caption{Delay from the rule-based MP algorithm.}
    \label{fig:delay_rule}
\end{figure}

\begin{figure}[!ht]
    \centering
    \begin{subfigure}{0.8\textwidth}
        \centering
        \includegraphics[width=\textwidth]{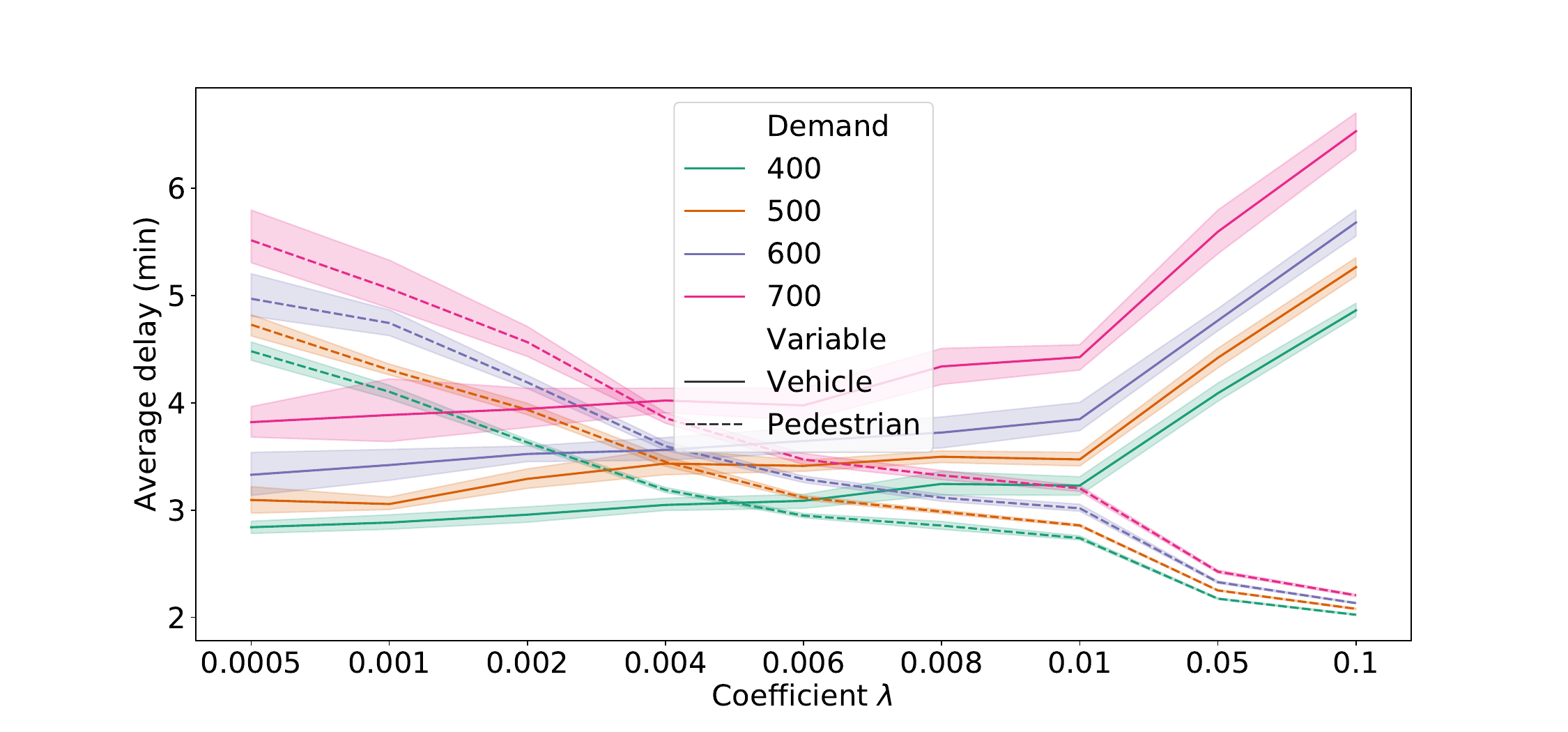}
        \caption{Average delay (min)}
    \end{subfigure}
    
    \begin{subfigure}{0.8\textwidth}
        \centering
        \includegraphics[width=\textwidth]{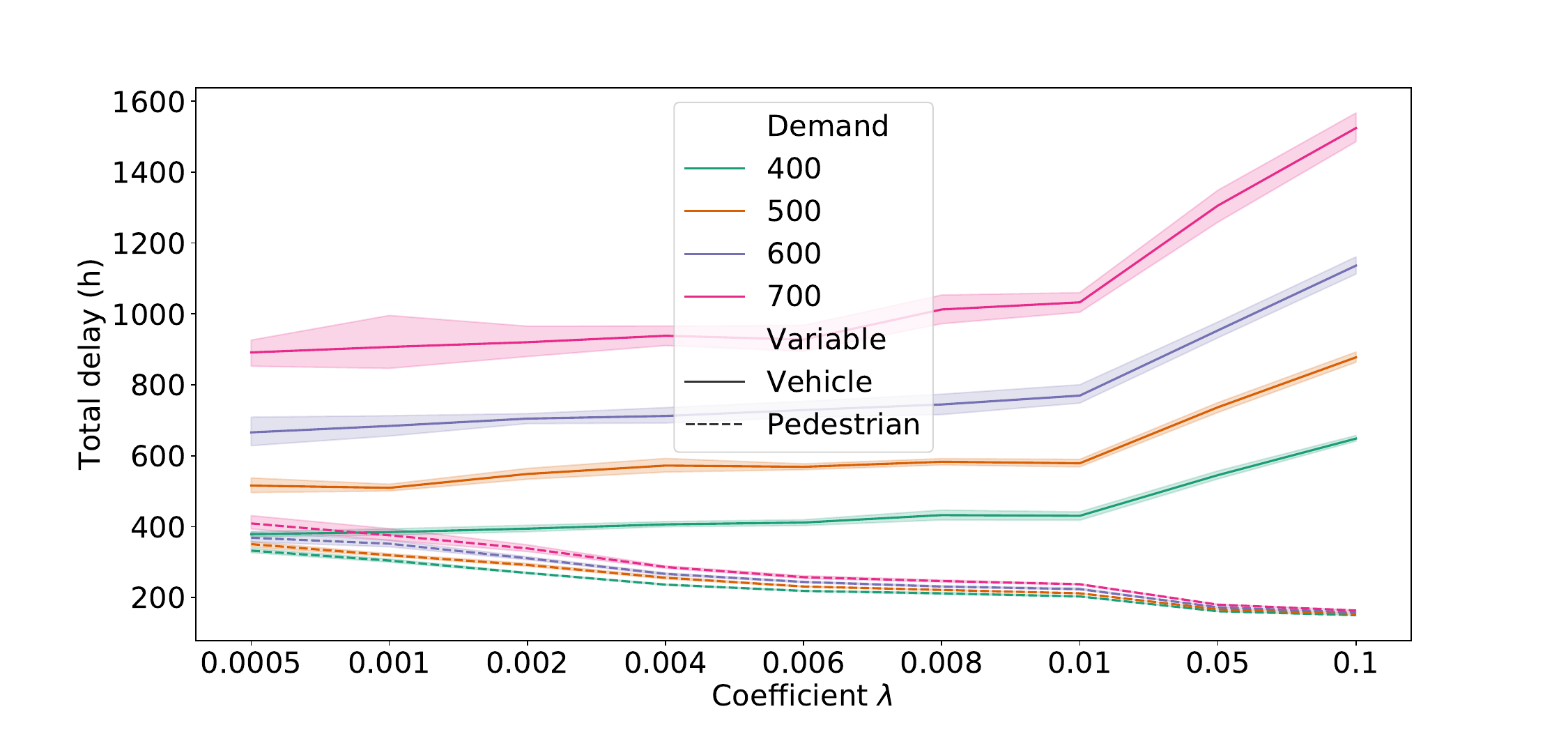}
        \caption{Total delay (h)}
    \end{subfigure}
    \caption{Delay from PQ-MP.}
    \label{fig:delay_propose}
\end{figure}

Figure \ref{fig:delay_rule} reveals that, under the rule-based MP algorithm, the delay of vehicles decreases as the threshold for the pedestrian waiting time increases, and the delay incurred by pedestrians is close to a linear function of the threshold. The reason is that a pedestrian phase is activated when the largest delay incurred by all pedestrians served by this phase is equal to the threshold. Therefore, the threshold can be regarded as an approximation of the largest delay that can be incurred by a pedestrian at an intersection. Assume that the pedestrian arrivals are random, then the average/total delay of pedestrians is linear to the largest delay, which is equivalent to the threshold value. 

Figure \ref{fig:delay_propose} shows that when $\lambda$ is lower than 0.01, the delay of vehicles resulting from the PQ-MP is relatively stable against the change in the coefficient. However, the delay of pedestrians can be reduced significantly by increasing the coefficient. When the coefficient exceeds 0.01, the delay of vehicles increases rapidly. 

The actual delay for the Q-MP is omitted since it has the worst overall performance. However, it is used as the baseline for the following comparison between PQ-MP and the rule-based MP.

In order to generate a fair comparison, we assume the average occupancy of vehicles is 1.3 pax/veh and calculate the total person delay to evaluate the performance of PQ-MP for all coefficients and the rule-based MP algorithm for all threshold values under all vehicle demand levels. We found that the coefficient of 0.0006 and threshold of 80 s generate the lowest person delay for both algorithms, respectively. Therefore, only the results from the corresponding models are compared. Note that these values do not necessarily generate the best performance for other scenarios. In practice, the values of both coefficients are determined by individual traffic engineers based on how they value the operational efficiencies between vehicles and pedestrians. For a better visualization, the delay reductions from both models compared to Q-MP are shown in Figure \ref{fig:delay_improve}. A positive reduction in these figures represents the algorithm can reduce travel delay compared to the Q-MP, and vice versa. These results manifest that although the rule-based model can achieve a desirable performance for vehicles, the overall efficiency cannot be ensured. By considering the traffic states of both vehicles and pedestrians in the pressure calculation, the PQ-MP manages to improve the overall efficiency. This improvement is more significant for relatively low vehicle demand scenarios. Note that if we select a lower threshold for the waiting time, e.g., 60 s, both the vehicle delay and pedestrian delay from the rule-based model are higher than those of the PQ-MP. 

\begin{figure}[!ht]
    \centering
    \begin{subfigure}{0.8\textwidth}
        \includegraphics[width=4.5in]{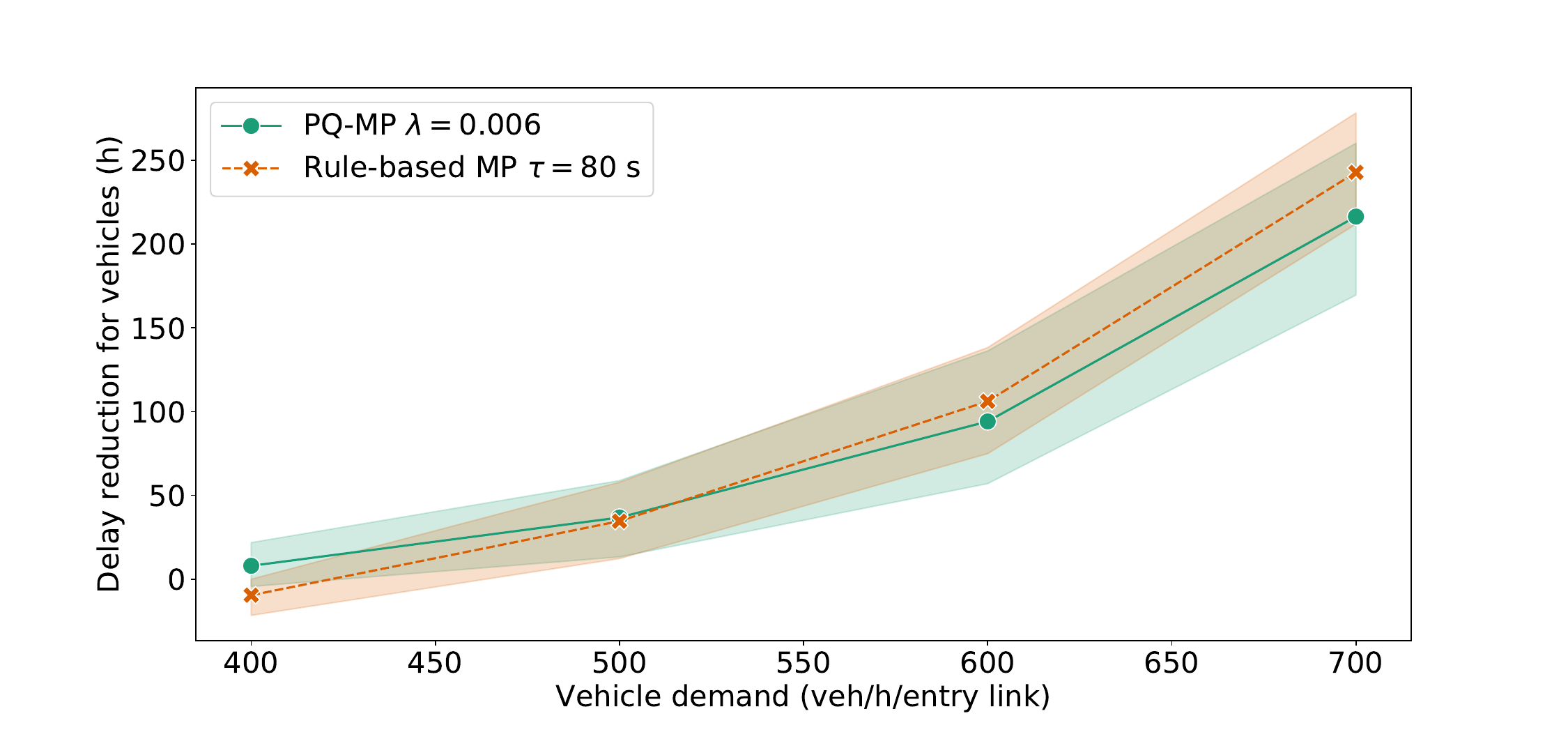}
    	\centering
    	\caption{Vehicle delay reduction.}
    	\label{fig:veh_improve}
    \end{subfigure}
    
    \begin{subfigure}{0.8\textwidth}
        \includegraphics[width=4.5in]{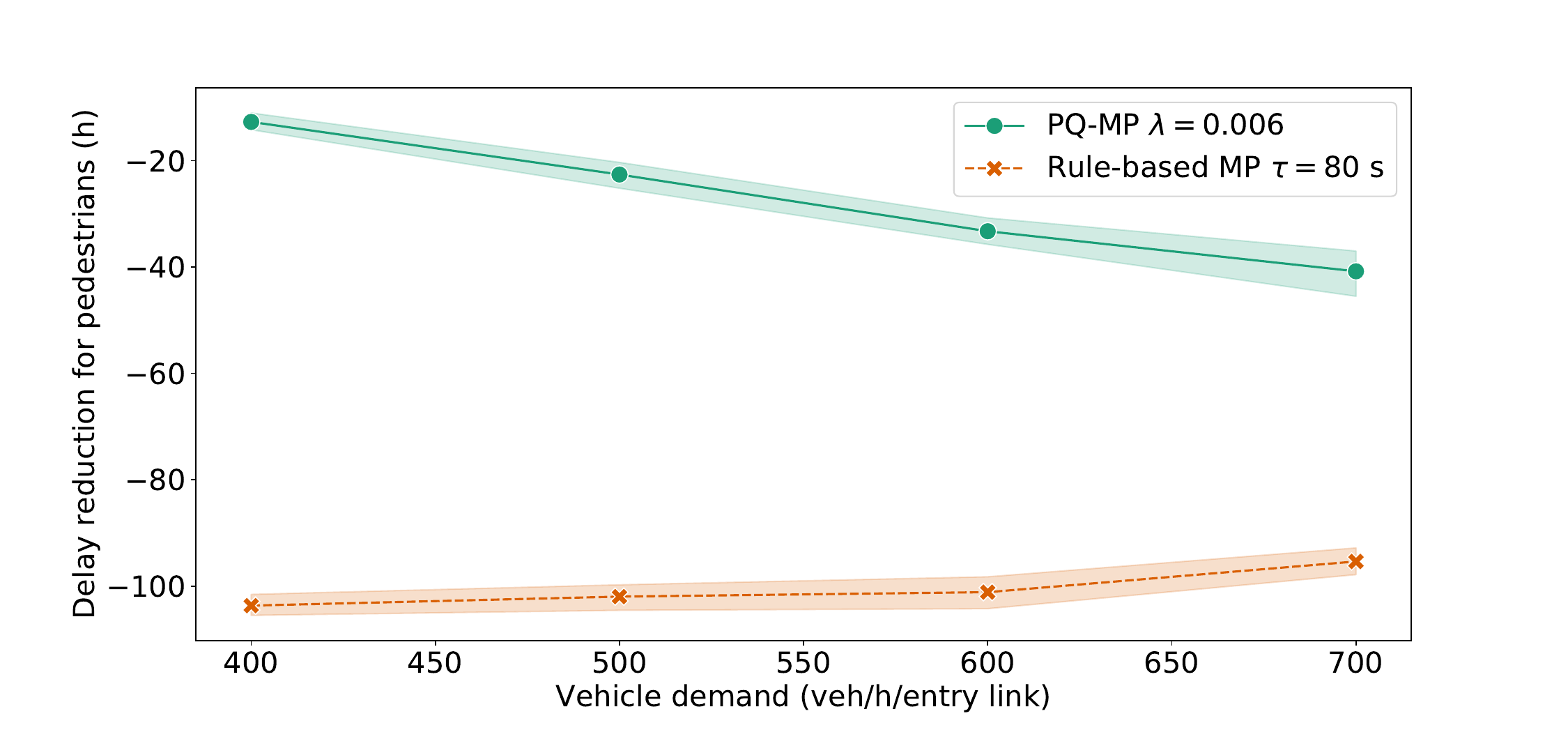}
    	\centering
    	\caption{Pedestrian delay reduction.}
    	\label{fig:ped_improve}
    \end{subfigure}

    \begin{subfigure}{0.8\textwidth}
        \includegraphics[width=4.5in]{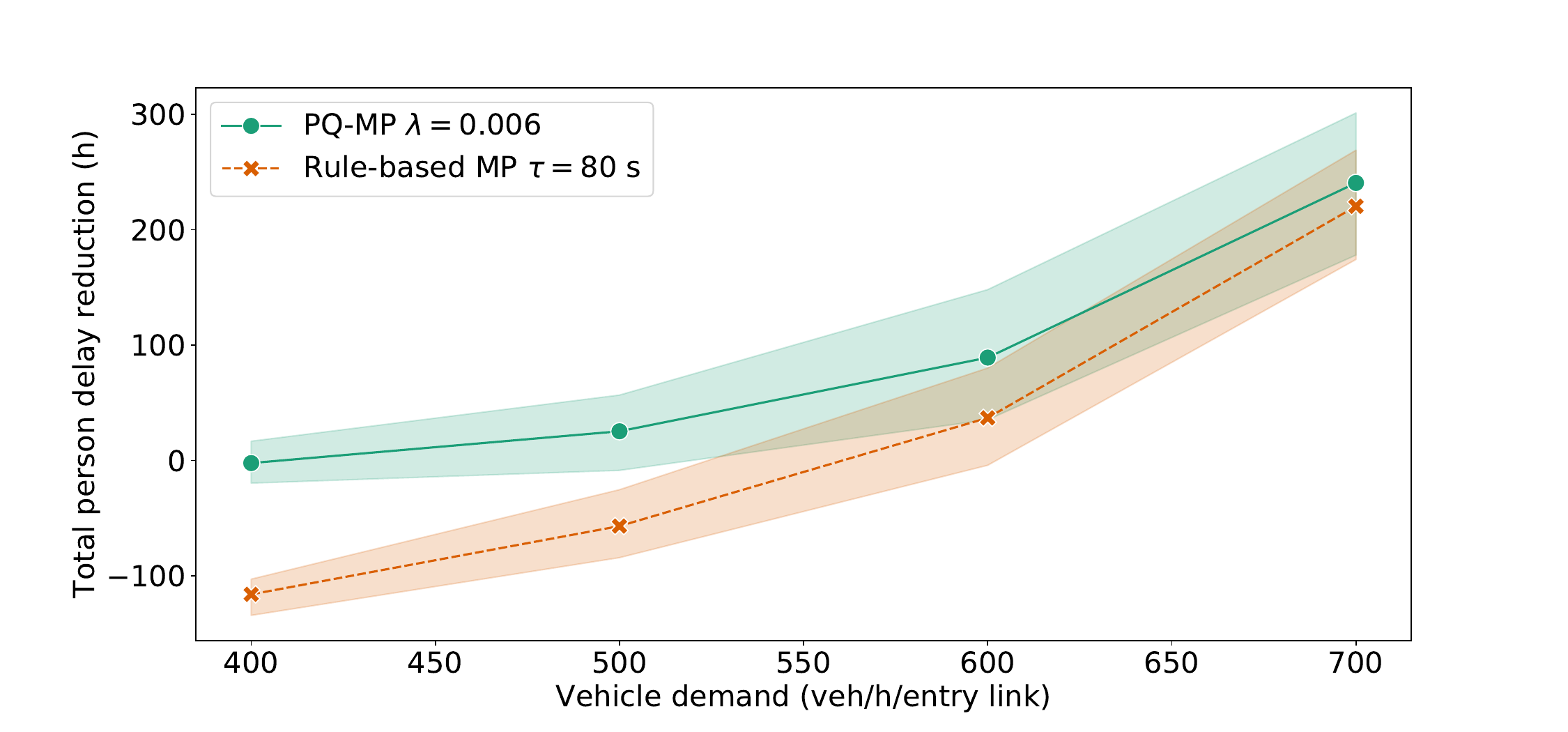}
    	\centering
    	\caption{Person delay reduction.}
    	\label{fig:total_improve}
    \end{subfigure}
    \caption{Delay reduction compared to Q-MP.}
    \label{fig:delay_improve}
\end{figure}

Figure \ref{fig:veh_improve} indicates that compared to the Q-MP, both models can reduce the vehicle delay significantly, especially for the large vehicle demand scenarios. Moreover, the rule-based model generates a lower vehicle delay than the proposed algorithm when vehicle demand exceeds 600 veh/h/entry link, which is consistent with the findings on the stable region. On the contrary, both models increase the pedestrian delay compared to Q-MP, as shown in Figure \ref{fig:ped_improve}. This is reasonable due to the same reason for the smaller vehicle stable region of Q-MP mentioned in the previous section. 

The proposed PQ-MP produces considerably lower pedestrian delay than the rule-based model, as shown in Figure \ref{fig:ped_improve}. Consequently, the person delay from PQ-MP is lower than the delay from the rule-based model under all tested vehicle demand levels, as shown in Figure \ref{fig:total_improve}. In addition, due to the large delay incurred by pedestrians, the person delay from the rule-based MP is significantly higher than the Q-MP when the vehicle demand is under 500 veh/h/entry link. The PQ-MP generates similar person delay to the Q-MP when the vehicle demand is 400 veh/h/entry link and significantly lower delay for all other vehicle demand levels. Compared to rule-based MP, the PQ-MP can reduce the total person delay by 114 hours when the vehicle demand is 400 veh/h/entry link. The reduction in person delay decreases with increasing vehicle demand. It reaches 20 hours when the vehicle demand is equal to 700 veh/h/entry link. 

\subsubsection{Influence of pedestrian demand on vehicle delays}

We further explore the vehicle delay incurred in the high (green) and low (blue) pedestrian demand regions shown in \ref{fig:net} to investigate the impact of pedestrian demand level on vehicle delays from both the rule-based MP and the PQ-MP algorithms. The delay reductions compared to Q-MP, which still serves as the baseline algorithm, are shown in Figure \ref{fig:veh_improve_pedregion}. It shows that, consistent with the pattern of the entire network shown in Figure \ref{fig:veh_improve}, the delay reductions from both models in both region increase with the vehicle demand, and the vehicle delay reduction from the PQ-MP, shown by the circle markers, is higher than that from the rule-based MP, shown by the cross markers, in both regions when the vehicle demand is less than 500 veh/h/entry link, and different pattern occurs when the vehicle demand exceeds 600 veh/h/entry link. In addition, two other interesting phenomena are observed.

\begin{figure}[!htb]
	\includegraphics[width=6in]{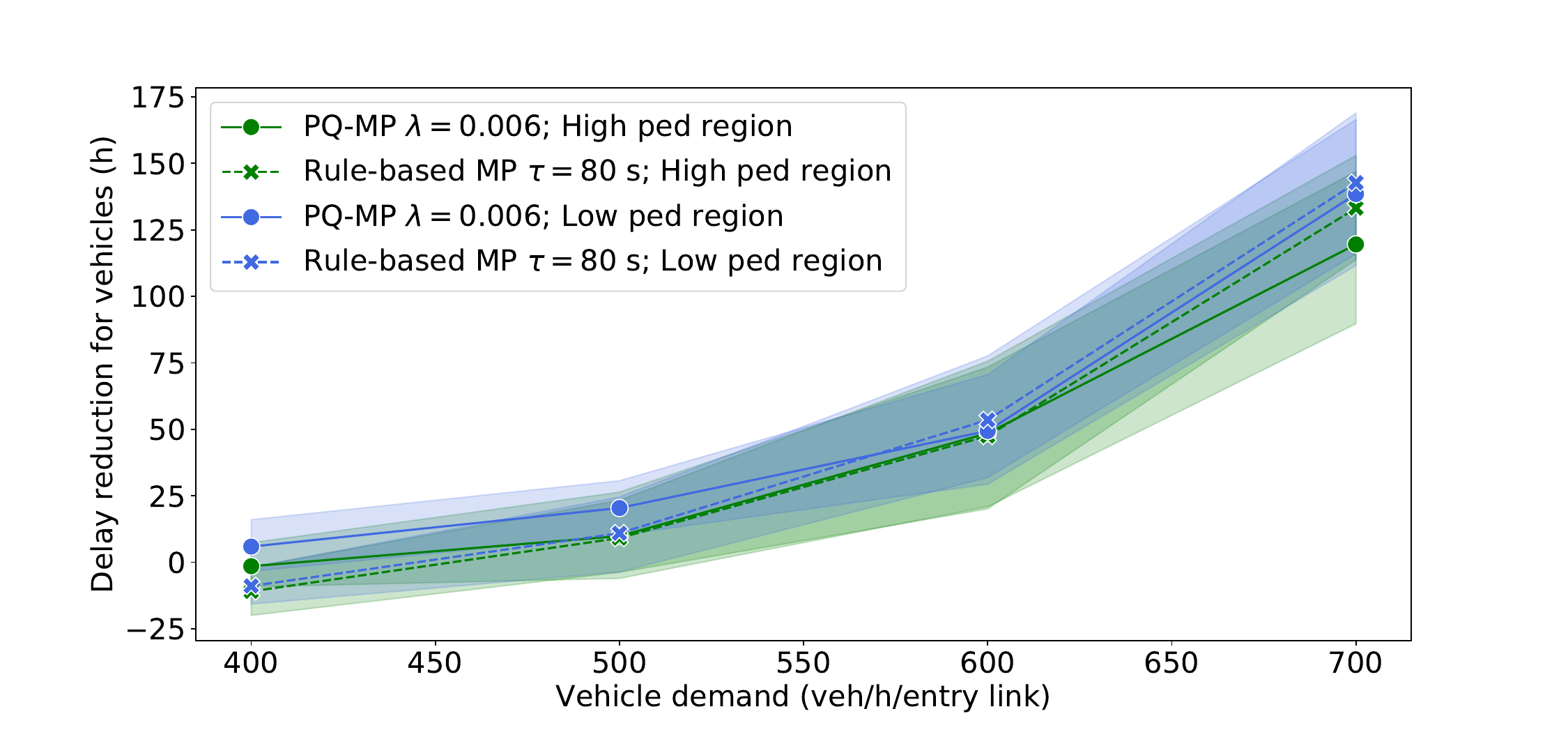}
	\centering
	\caption{Impact of pedestrian demand on vehicle delay.}
	\label{fig:veh_improve_pedregion}
\end{figure} 

First, when the vehicle demand is very high, i.e., 700 veh/h/entry link, the difference of delay reduction between both models in the high pedestrian demand region, shown by the green markers, is larger than that in the low pedestrian demand region, shown by the blue markers. The rule-based MP employs the same threshold value for both regions. However, in PQ-MP, the pedestrian term has a higher proportion in the phase pressure, shown by Equation \eqref{eq:pressure}, in the high pedestrian demand region than that in the low pedestrian demand region. Consequently, under the control of PQ-MP, the phases serving pedestrians are activated more frequently, and thus, the vehicle delay is higher in the high demand region, which results in a larger difference of delay reduction between both models.

Second, for both models, the delay reduction in the region with low pedestrian demand region, shown by the blue markers, is higher than that in the region with high pedestrian demand region, shown by the green markers. Note that since all entry links have the same external inflows, and the turning ratios at all intersections are identical, the patterns of average vehicle flows in both regions are similar. In addition, because the Q-MP only considers vehicle states, in the long run, the signal timing of one intersection, in terms of the average service rate for each vehicle movement at that intersection, is expected to be similar to that of the symmetric intersection with respect to the common boarder of both regions. Unlike Q-MP, both the rule-based MP and PQ-MP tend to allocate less green time to pedestrians in the low pedestrian demand region than the other region. For the rule-based model, the average pedestrian arrival rate at an intersection is lower in the low pedestrian demand region, so the intervals between pedestrian signal activation condition being satisfied are longer. Therefore, the pedestrian phases are activated less frequently in the low pedestrian demand region. Consequently, the vehicle delay improvement in that region from the rule-based MP is higher. Similarly, when the pedestrian flow is low, the proportion of the pedestrian term in the phase pressure of the PQ-MP, shown by Equation \eqref{eq:pressure}, is low. Therefore, under the control of PQ-MP, the phase phase serving pedestrians is activated less frequently in the low pedestrian demand region as well, and the delay reduction in this region is higher.

\subsubsection{Impact of measure errors on travel delays}
All results from the proposed PQ-MP in previous sections are based on accurate measurement of pedestrian queues; however, this requirement is difficult or even impossible to meet under the available infrastructures at most intersections. Therefore, it is necessary to test the performance of the PQ-MP in the presence of measurement errors in pedestrian queues. To this end, we assume that the measurement of a pedestrian queue for a crosswalk movement follows a normal distribution with mean equal to the actual pedestrian queue length and standard deviation equal to a proportion of the mean, denoted by $\sigma$. For example, if the actual pedestrian queue length is 10, and $\sigma=0.1$, the measurement output is obtained by randomly drawing a value from the distribution of $N(10,0.1\times 10)=N(10,1)$. Bounds are included to ensure that queues lengths are non-negative values. The following values for $\sigma$ are tested: $\{0.1, 0.2, 0.3, 0.4, 0.5\}$. 

Since we already demonstrated that both the rule-based MP and PQ-MP can effectively reduce person travel delay compared to Q-MP, for simplicity, we use the rule-based MP ($\tau=80 s$) with perfect knowledge of pedestrian waiting time as the baseline algorithm in this section. Figure \ref{fig:impact_delay} shows the influence of $\sigma$ on the delay reductions of vehicles, pedestrians and total persons. Figure \ref{fig:ped_improve_random_ped} shows that the pedestrian delay improvement from the PQ-MP generally decreases with the increase of $\sigma$. With the increase in the uncertainties in the measurement, the frequency of the PQ-MP makes the ``correct" decision for pedestrian queues, in terms of serving the phase with longest pedestrian queues, is lower, which leads to an undermining of the pedestrian operational efficiency. However, its influence on vehicle delays is less predictable, especially when the vehicle demand is high, as shown in Figure \ref{fig:veh_improve_random_ped}. A possible reason is that pedestrians have a higher priority than right-turn vehicles, which can generate either an increase or a decrease in vehicle delays. For example, if the phase for a pedestrian crosswalk movement that has a relatively short queue length is activated due to the measurement error, compared to the case when the controller activates the phase serving the longest pedestrian queue, the associated right-turning vehicles could incur less delay resulting from the shorter yielding time. Overall, when the vehicle demand delay is less than 600 veh/h/entry link, the PQ-MP with measurement randomness can still outperform the rule-based MP under all tested uncertainty values, even though the rule-based MP has perfect information. When the vehicle demand reaches 700 veh/h/entry link, as shown in Figure \ref{fig:total_improve}, both models with accurate measurement generate similar person delay. After considering measurement errors in the PQ-MP, the person delay increases and exceeds that from the rule-based MP under this vehicle demand level.

\begin{figure}[!ht]
    \centering
    \begin{subfigure}{0.8\textwidth}
        \includegraphics[width=4.5in]{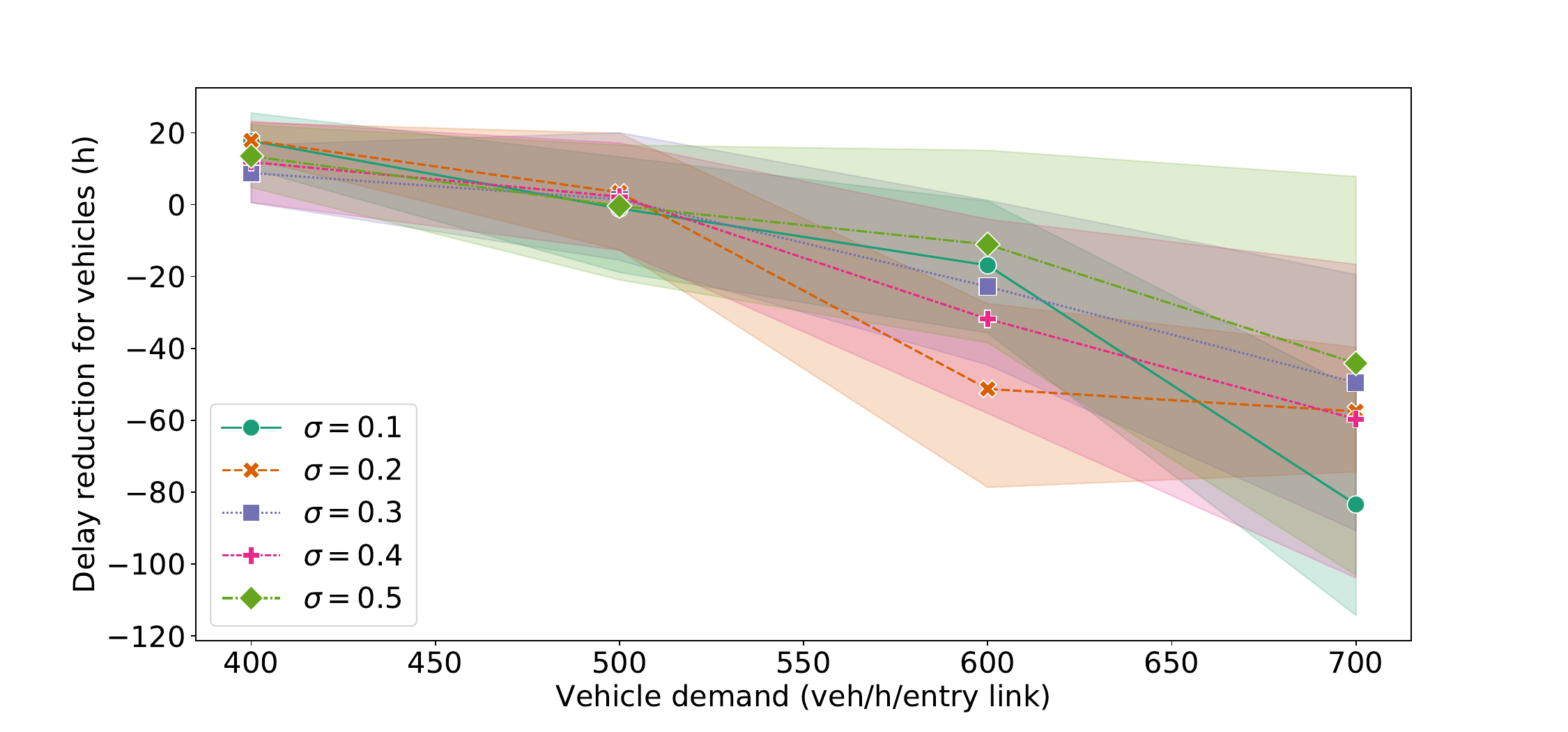}
    	\centering
    	\caption{Vehicle delay reduction.}
    	\label{fig:veh_improve_random_ped}
    \end{subfigure}
    
    \begin{subfigure}{0.8\textwidth}
        \includegraphics[width=4.5in]{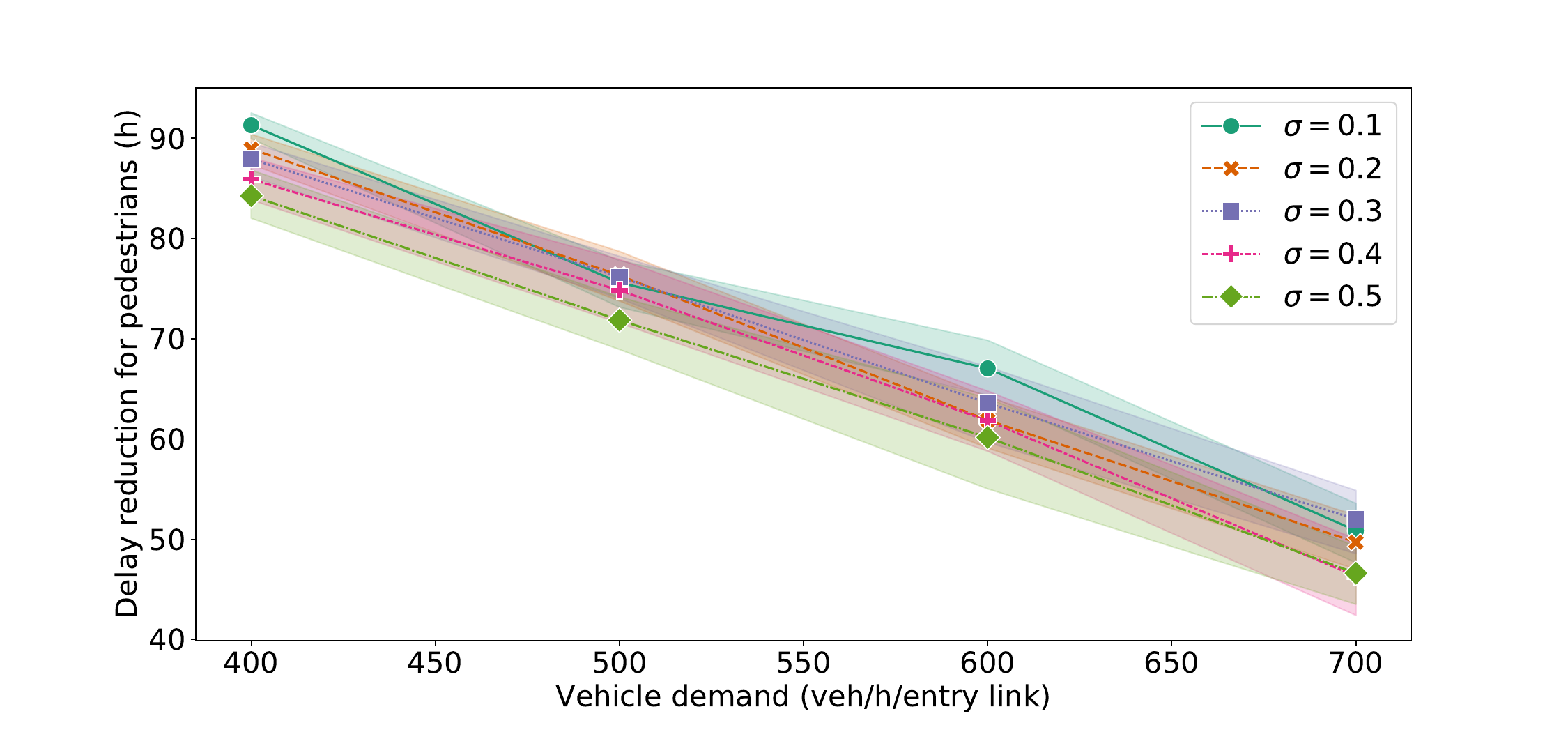}
    	\centering
    	\caption{Pedestrian delay reduction.}
    	\label{fig:ped_improve_random_ped}
    \end{subfigure}

    \begin{subfigure}{0.8\textwidth}
        \includegraphics[width=4.5in]{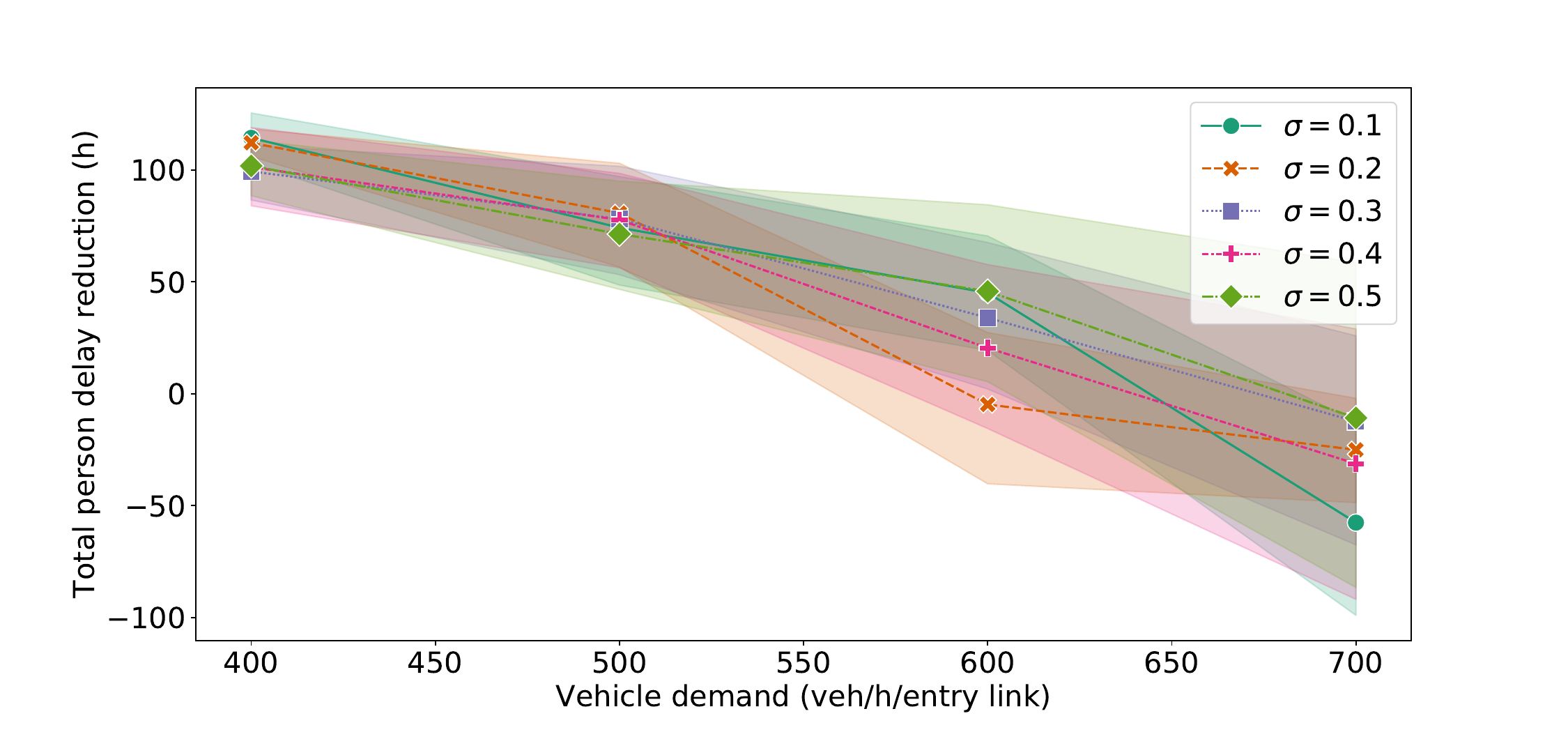}
    	\centering
    	\caption{Person delay reduction.}
    	\label{fig:total_improve_random_ped}
    \end{subfigure}
    \caption{Impact of measurement errors on delay.}
    \label{fig:impact_delay}
\end{figure}

\section{Concluding remarks}\label{sec:conclusion}
This paper proposes a novel MP algorithm, PQ-MP, that incorporates pedestrians into pressure computation to determine signal timings while maintaining the MP control structure. Pedestrian movements are divided into two groups: sidewalk movements and crosswalk movements, and traffic dynamics models are developed for both groups. Then, an MP algorithm that updates signal phase based on both vehicle queues and pedestrian queues of the crosswalk movement is derived. The proposed PQ-MP is decentralized, and does not require the knowledge on external demand. More importantly, we proved that PQ-MP has the maximum stability property, which ensures the maximization of the combined vehicle and pedestrian throughput. The stability in both vehicular and pedestrian queues is critical for the overall operational efficiency. This is highlighted by the simulation results which demonstrate that the PQ-MP generates a lower person delay than the Q-MP and a recently proposed rule-based MP algorithm at various vehicle demand levels, although the stable vehicle region of the PQ-MP is smaller than that of the rule-based model with a high threshold for pedestrian waiting times. We also tested the performance of the PQ-MP with measurement errors in pedestrian queues. The simulation results show that, in general, pedestrian delay increases with the increase in the measurement error. However, the impact of the measurement error on vehicle delays is less predictable than that on the pedestrian delay. Moreover, the PQ-MP with measurement errors still outperforms the rule-based model when the vehicle demand is not extremely large.

The proposed PQ-MP requires the number of pedestrians of each crosswalk movement at an intersection to be available, which may restrict its practicability. It is promising to propose a method to predict this metric based on readily available information, such as the first pedestrian arrival from pushing the cross button and the number of pedestrians served by each phase in previous steps, etc. In addition, since the maximum occupancy of pedestrian nodes is usually greater than that of vehicle links, it may require a longer time step than the free flow travel time for vehicles traversing a link to serve all pedestrians that are already waiting at the pedestrian nodes. Moreover, extending green time for T-R phases can help reduce the number of right-turn vehicles blocked by the conflicting pedestrian movements. Therefore, the development of an MP algorithm with dynamic time step size is another meaningful research direction. 

\section{Acknowledgements}
This research was supported by NSF, United States Grant CMMI-1749200.

\newpage

\bibliographystyle{elsarticle-harv}
\bibliography{biblio}
\end{document}